	\theoremstyle{nonumberplain}	
	\newtheorem{proof}{Proof.}
\newtheorem{thm}{Theorem}
\newtheorem{prop}{Proposition}
\newlist{myEnumerate}{enumerate}{2}
\setlist[myEnumerate,1]{leftmargin=*,topsep=0pt,itemsep=0pt,parsep=0pt,label=\normalfont\textbf{\arabic*)}}
\setlist[myEnumerate,2]{leftmargin=*,topsep=0pt,itemsep=0pt,parsep=0pt,label=\normalfont(\alph*)}
\newlength\myindent
\newcommand*{\eg}{\textit{e.g.}\@\xspace}
\newcommand\numberthis{\addtocounter{equation}{1}\tag{\theequation}}
\begin{document}

\title{Dynamic Parallel and Distributed Graph Cuts}


\author[1,2]{Miao Yu}
\author[1,4]{Shuhan Shen}
\author[1,3,4]{Zhanyi Hu$^{\star}$}
\affil[1]{National Laboratory of Pattern Recognition, Institute of Automation, Chinese Academy of Sciences, Beijing 100190, China}
\affil[2]{Zhongyuan University of Technology, Zhengzhou 450007, China}
\affil[3]{CAS Center for Excellence in Brain Science and Intelligence Technology, Chinese Academy of Sciences, Beijing 100190, China}
\affil[4]{University of Chinese Academy of Sciences, Beijing 100049, China}
\affil[ ]{\textit {\{myu, shshen, huzy\}@nlpr.ia.ac.cn}}
\maketitle


\begin{abstract}
    Graph cuts are widely used in computer vision. To speed up the optimization process and improve the scalability for large graphs, Strandmark and Kahl~\cite{strandmark2010parallel,strandmark2011parallel} introduced a splitting method to split a graph into multiple subgraphs for parallel computation in both shared and distributed memory models. However, this parallel algorithm (the parallel BK-algorithm) does not have a polynomial bound on the number of iterations and is found to be non-convergent in some cases~\cite{shekhovtsov-12-mixed_maxflow} due to the possible multiple optimal solutions of its sub-problems.

    To remedy this non-convergence problem, in this work, we first introduce a merging method capable of merging any number of those adjacent sub-graphs that can hardly reach agreement on their overlapping regions in the parallel BK-algorithm. Based on the pseudo-boolean representations of graph cuts, our merging method is shown to be effectively reuse all the computed flows in these sub-graphs. Through both splitting and merging, we further propose a dynamic parallel and distributed graph cuts algorithm with guaranteed convergence to the globally optimal solutions within a predefined number of iterations. In essence, this work provides a general framework to allow more sophisticated splitting and merging strategies to be employed to further boost performance. Our dynamic parallel algorithm is validated with extensive experimental results.
\end{abstract}

\begin{IEEEkeywords}
graph cuts, parallel computation, convergence, Markov random field\\
\end{IEEEkeywords}

\section{Introduction}
Graph cuts optimization plays an important role in solving the following energy minimization problem, which is usually derived from a Maximum A \emph{Posteriori} (MAP) estimation of a Markov Random Field (MRF)~\cite{blake2011markov}:
\begin{equation}
    E(\mathbf{X}) = \sum_{c\in\mathcal{C}}\psi_c(\mathbf{x}_c),
    \label{eq:energyDef}
\end{equation}
where $\psi_c(\cdot)$ is the potential function, $\mathbf{x}_c$ is the set of variables defined on clique $c$, and $\mathcal{C}$ is the set of all the cliques. The most recent computer vision problems often require the energy models to contain a massive number of variables, have large label spaces and/or include higher-order interactions, which invariably enlarge the scale of the graph and increase the number of calls of graph cuts algorithm for solving the energy functions in the form (\ref{eq:energyDef}). Consequently, determining how to increase the scalability and speed up the optimization process of graph cuts becomes an urgent task. However, processor makers favor multi-core chip designs to manage CPU power dissipation. Therefore, fully exploiting the modern multi-core/multi-processor computer architecture to further boost the efficiency and scalability of graph cuts has attracted much attention recently.

\subsection{Related work}
Thanks to the duality relationship between maxflow and minimal $s-t$ cut, the maxflow algorithms, usually categorized as augmenting-path-based~\cite{sleator1981data,boykov2004experimental,ford1956maximal,goldberg2011maximum} and push-relabel-based~{\cite{goldberg1988new,cherkassky1995implementing,hochbaum2008pseudoflow,goldberg2008partial}, have been a focus for parallelizing in the literature.

Push-relabel-based methods~{\cite{goldberg1988new,cherkassky1995implementing,hochbaum2008pseudoflow,goldberg2008partial} are relatively easy to parallelize due to their memory locality feature. Therefore, a number of parallelized maxflow algorithms have been developed~{\cite{goldberg1991processor,bader2005cache,anderson1995parallel,delong2008scalable}. These algorithms generally exhibit superior performance on huge 3D grids with high connectivity. Nevertheless, they are usually unable to achieve the same efficiency on 2D grids or moderately sized sparse 3D grids, which is common in many computer vision problems, as the state-of-the-art serial augmenting path method~\cite{boykov2004experimental,goldberg2011maximum} could do on a commodity multi-core platform. The GPU implementation~\cite{vineet2008cuda} often fails to produce the correct results unless the amount of regularization is low~\cite{strandmark2010parallel}. The data transfer between the main memory and the graphics card is also a bottleneck for the GPU-based implementation.

Perhaps the most widely used graph cuts solver in the computer vision community is the algorithm proposed by Boykov and Kolmogorov~\cite{boykov2004experimental} (called the BK-algorithm). This is a serial augmenting path maxflow algorithm that effectively reuses the two search trees originated from $s$ and $t$, respectively. To parallelize the BK-algorithm for further efficiency, the graph is usually split into multiple parts, either disjoint or overlapping. Liu and Sun~\cite{liu2010parallel} uniformly partitioned the graph into a number of disjoint subgraphs, concurrently ran the BK-algorithm in each subgraph to obtain short-range search trees within each subgraph, and then adaptively merged adjacent subgraphs to enlarge the search range until only one subgraph remained and all augmenting paths were found. For some 2D image segmentation cases, this algorithm can achieve a near-linear speedup with up to $4$ computational threads. However, this method requires a shared-memory model, which makes it difficult to use on distributed platforms. To make the algorithm applicable to both shared and distributed models, Strandmark and Kahl~\cite{strandmark2010parallel,strandmark2011parallel} proposed a new parallel and distributed BK-algorithm (called the parallel BK-algorithm), which splits the graph into overlapped subgraphs based on dual decomposition~\cite{everett1963generalized,komodakis2007mrf}. The BK-algorithm is then run in a parallel and iterative fashion. Unfortunately, due to the possible multiple optimal solutions of the BK-algorithm, which is an inherent characteristic of all graph cuts methods, the parallel BK-algorithm may fail to converge in some cases~\cite{shekhovtsov-12-mixed_maxflow}.

By combining push-relabel and path augmentation methods, Shekhovtsov and Hlavac~\cite{shekhovtsov-12-mixed_maxflow} proposed an algorithm to further reduce the number of message exchanges in the region push-relabel algorithm~\cite{delong2008scalable}. Bhusnurmath and Taylor~\cite{bhusnurmath2008graph} reformulated graph cuts as an $\ell_1$ minimization problem and provided a highly parallelized implementation. However, even its GPU-based implementation is not significantly faster than the BK-algorithm for any type of graph.

\section{Convergence problem in the parallel BK-algorithm}
Because our work intends to remedy the non-convergence problem of the parallel BK-algorithm~\cite{strandmark2010parallel,strandmark2011parallel}, a brief review of it is first provided, followed by a convergence analysis of the parallel BK-algorithm. 

\subsection{A brief review of the parallel BK-algorithm}
By formulating the graph cuts problem defined on graph $G(V,C)$, with vertex set $V=\{s,t\}\cup\mathcal{V}$ and the capacity set $C$, as a linear program $E_{V}(\mathbf{x})$, Strandmark and Kahl~\cite{strandmark2010parallel,strandmark2011parallel} introduced a splitting method to split graph $G(V,C)$ into a set of overlapping subgraphs. Without loss of generality, only the two-subgraph case is discussed here, denoted as $G_1(V_1,C_1)$ and $G_2(V_2,C_2)$, where $V_1=\{s,t\}\cup\mathcal{V}_1$, $V_2=\{s,t\}\cup\mathcal{V}_2$ and  $\mathcal{V}_1\cap\mathcal{V}_2\ne\varnothing$. If the separable condition holds, stated as $\forall c_{ij}>0, \exists V_k\implies i\in V_k\land j\in V_k$, finding the $s-t$ cut with minimal cost in graph $G(V,C)$ can be reformulated as maximizing a concave non-differential dual function $g(\bm{\lambda})$, and it can be solved in the following iterative way: 1) calculate the optimal solutions $\accentset{*}{\mathbf{x}}_{\mathcal{V}_1}$ and $\accentset{*}{\mathbf{x}}_{\mathcal{V}_2}$ from the two sub-problems $E_{V_1}(\mathbf{x^1}\vert \bm{\lambda})$ and $E_{V_2}(\mathbf{x^2}\vert -\bm{\lambda})$ defined on $G_1(V_1,C_1)$ and $G_2(V_2,C_2)$, respectively, which are in the same form as the linear program representation of graph cuts such that the BK-algorithm can be applied on these two subgraphs simultaneously to obtain an ascent direction; 2) update the dual variables $\bm\lambda$ along this ascent direction to modify $E_{V_1}(\mathbf{x^1}\vert \bm{\lambda})$ and $E_{V_2}(\mathbf{x^2}\vert -\bm{\lambda})$ to initiate the next iteration. The part of Fig. \ref{fig:twoSubgraphs} before ``Merge Subgraphs'' illustrates this process, where the linear program representation $E$ for each graph (subgraph) is depicted above its corresponding box.

The above process is repeated until the following stopping rule is satisfied:
\begin{equation}
\label{eq:stopCriterion}
\left[\accentset{*}{\mathbf{x}}_{\mathcal{V}_1}^{(k)}- \accentset{*}{\mathbf{x}}_{\mathcal{V}_2}^{(k)}\right]_{\mathcal{V}_1\cap\mathcal{V}_2} = \bm 0,
\end{equation}
where $\left[\,\bm\cdot\,\right]_{\mathcal{V}_1\cap\mathcal{V}_2}$ is the projection on $\mathcal{V}_1\cap\mathcal{V}_2$. Once the above condition holds, the supergradient of $g(\bm{\lambda})$ equals 0, which means the maximum value is reached. 

\begin{figure}[htbp]
	\begin{center}
		\subfloat[One thread, 0.060 s]{\label{fig:oneThread}\includegraphics[width=0.49\linewidth]{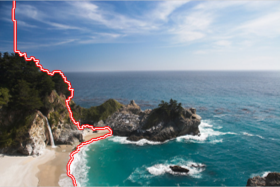}}\,
		\subfloat[Two threads, 0.047 s]{\label{fig:twoThreads}\includegraphics[width=0.49\linewidth]{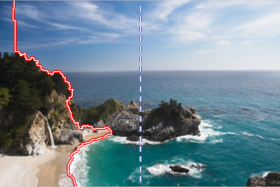}}\\
		\subfloat[Three threads, 0.138 s]{\label{fig:threeThreads}\includegraphics[width=0.49\linewidth]{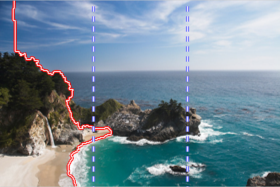}}\,
		\subfloat[Four threads, $\infty$]{\label{fig:fourThreads}\includegraphics[width=0.49\linewidth]{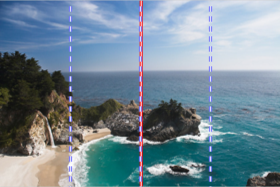}}
	\end{center}
	\caption{Solving an identical graph cuts problem using the parallel BK-algorithm with different numbers of threads.}
	\label{fig:diffNumThreads}
\end{figure}

\subsection{Convergence problem}\label{sec:convergence}

In theory, the parallel BK-algorithm is intended to maximize a concave non-differential function $g(\bm{\lambda})$ whose global optimal solution can be found by the supergradient ascent method. However, this algorithm may have problems converging to the optimal solution for the following two reasons: 1) Its sub-problems, $E_{V_1}(\mathbf{x^1}\vert \bm{\lambda})$ and $E_{V_2}(\mathbf{x^2}\vert -\bm{\lambda})$, may have multiple optimal solutions. Fig.~3 of Strandmark and Kahl's original paper~\cite{strandmark2010parallel} gives such an example. 2) The heuristic step size rule used by the parallel BK-algorithm, has a much faster convergence rate in practice, but it has no theoretical guarantee of convergence. 

Fig.~\ref{fig:diffNumThreads} shows an example of failure of the parallel BK-algorithm. Note that unless otherwise stated, all the experiments in this paper are conducted on a Sun Grid Engine cluster, with each compute node having $2\times$Intel\textsuperscript{\textregistered} Xeon\textsuperscript{\textregistered} E5-2670 v2 @ 2.5GHz (up to 20 Cores and 40 Threads) and 128GB memory. The graph associated with the image of Fig. \ref{fig:diffNumThreads} is constructed as follows: The edge costs are determined by the image gradient and the pixels in the leftmost and rightmost columns are connected to the source and the sink respectively. The red curves in Fig. \ref{fig:diffNumThreads} depict the final ``cut'', and the dotted blue lines are the boundaries of the neighboring parts. Fig. \subref*{fig:oneThread} is the result of the serial BK-algorithm. Its final ``cut'' serves as the ground-truth and its running time serves as the baseline in this example. Vertical splitting would lead to the ``worst-case scenario'', since all the possible $s-t$ paths are severed and all flows have to be communicated between the threads. Part of this example comes from Fig.~8 of Strandmark and Kahl's original paper~\cite{strandmark2010parallel}, where it is used to demonstrate the robustness of the parallel BK-algorithm to a poor splitting choice. They showed that if the graph is vertically split into only two parts, the parallel BK-algorithm could still obtain the same ``cut'', but could do so approximately $22\%$ faster than the serial BK-algorithm, as shown in Fig.~\subref*{fig:twoThreads}. However, we find that if the number of split parts is further increased, the performance of the parallel BK-algorithm begins to degrade. With three threads, as shown in Fig. \subref*{fig:threeThreads}, the parallel BK-algorithm takes more than twice the computational time of the serial BK-algorithm to obtain the same ``cut''. If the graph is vertically split into four parts, as shown in Fig. \subref*{fig:fourThreads}, the parallel BK-algorithm is unable to converge. 

To further assess the impact of the methods of splitting and the number of computational threads on the convergence of the parallel BK-algorithm, the following two fore-/back-ground segmentation problems are carried out on the enlarged Berkeley segmentation dataset~\cite{amfm_pami2011}, which consists of 500 images. The graph construction for the first segmentation problem, denoted as seg1, is the same as the above experiment. The graph construction for the second segmentation problem, denoted as seg2, is a lightly different from seg1. In seg2, each pixel is connected to both the source and the sink, and the edge capacity depends on the value of the pixel. Therefore, vertically splitting is a reasonable splitting choice. We tried 2 to 8 computational threads of the parallel BK-algorithm to solve these two segmentation problems on all 500 images. The number of images that failed to converge in each case are presented in Table \ref{tab:numFailedExamples}. The maximum allowed number of iterations is set to $1000$, which is the same as that used in its original implementation.

\begin{table}
    \caption{Number of images that failed to converge in the parallel BK-algorithm, where $\mathrm{PRB}$ represents the segmentation problems and $\mathrm{TRDS}$ represents the number of computational threads}
    \label{tab:numFailedExamples}
    \begin{center}
	\begin{tabular}{l|p{0.2cm}p{0.3cm}*{5}{p{0.45cm}}}
	    \toprule
	    \diagbox{\footnotesize PRB}{\footnotesize TRDS} & 2 & 3 & 4 & 5 & 6 & 7 & 8 \\
	    \midrule
	    seg1  & 0 & 16 & 500 & 500 & 500 & 500 & 500 \\
	    seg2  & 0 & 5 & 14 & 30 & 67 & 92 & 138 \\
	    \bottomrule
	\end{tabular}
    \end{center}
\end{table}

It can be seen from Table \ref{tab:numFailedExamples} that improper splitting and the thread number both have a great impact on the convergence of the parallel BK-algorithm, where all the images failed to converge in seg1 once the thread number exceeded 3. Even if the graph is split properly, the failure of convergence seems to be unavoidable with a moderate number of computation threads. In seg2, where the vertical splitting was reasonable, for approximately $3\%$ of the images, the method failed to converge with 4 computational threads, and the failure rate increased to nearly $30\%$ with 8 computational threads. The lack of convergence guarantee is a great deficiency of the parallel BK-algorithm, which severely hampers its applicability.

\section{A merging method} \label{sec:mergingMethod}
To remedy the non-convergence problem in the parallel BK-algorithm, we first introduce a merging method that can deal with the suspected neighboring subgraphs causing non-convergence and then give a correctness and efficiency analysis of the merging method based on pseudo-boolean representations of graph cuts. 


\begin{figure}[htbp]
    \begin{center}
	\captionsetup{justification=raggedright}
	\subfloat[$G_1(V_1,C_1)$]{\label{fig:mergeSub1}\includegraphics[height=0.4\linewidth]{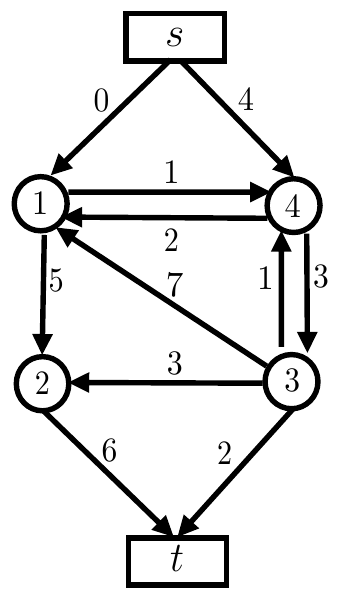}}\quad
	\subfloat[$G_2(V_2,C_2)$]{\label{fig:mergeSub2}\includegraphics[height=0.4\linewidth]{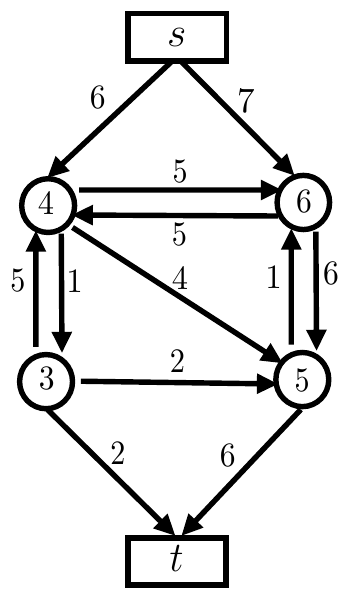}}\quad
	\subfloat[$G_{12}(V_{12},C_{12})$]{\label{fig:mergeGraph}\includegraphics[height=0.4\linewidth]{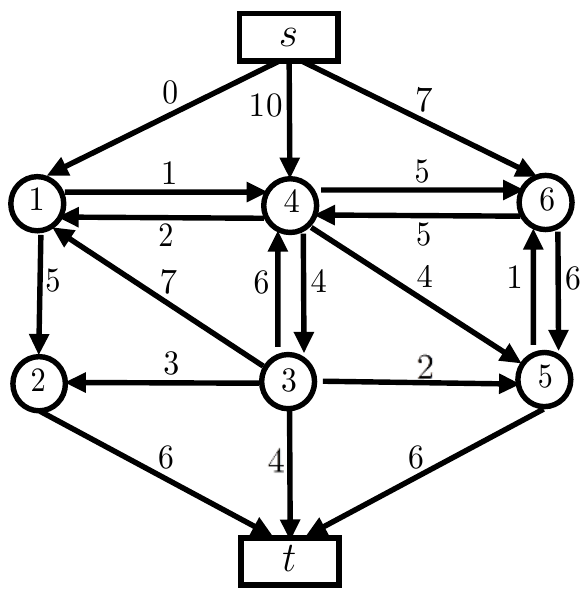}}
    \end{center}
    \caption{Merging two neighboring subgraphs $G_1(V_1,C_1)$ and $G_2(V_2,C_2)$ into a single graph $G_{12}(V_{12},C_{12})$.}
    \label{fig:mergeSubgraphs}
\end{figure}

As indicated by the experiments in the previous section, splitting a graph into more subgraphs often increases the speedup, but it also increases the possibility of non-convergence. However, it is difficult to know the optimal number of subgraphs at the beginning of the parallel BK-algorithm. Is it possible to dynamically adjust the current splitting of a graph when over-splitting is detected during the running time, so as to further speed up the computation while simultaneously guaranteeing convergence? To this end, a merging method is proposed in this paper.

Merging two subgraphs into a single graph is performed as follows: given two neighboring subgraphs $G_1(V_1,C_1)$ and $G_2(V_2,C_2)$, denote the merged graph as $G_{12}(V_{12},C_{12})$, where $V_1=\{s,t\}\cup\mathcal{V}_1$, $V_2=\{s,t\}\cup\mathcal{V}_2$, $\mathcal{V}_1\cap \mathcal{V}_2\ne\varnothing$ and $V_{12}=\{s,t\}\cup\mathcal{V}_{12}$. The merged vertex set $\mathcal{V}_{12}=\mathcal{V}_1\cup\mathcal{V}_2$. Only if the two endpoints are all within $V_1\cap V_2$ is the edge capacity of the merged graph $G_{12}$ the summation of the capacities of the same edge in $G_1$ and $G_2$. Otherwise, the edge capacity of the merged graph $G_{12}$ is simply that from only one of the two subgraphs. Fig.~\ref{fig:mergeSubgraphs} gives an example. Merging more than two subgraphs into a single graph can be performed similarly, where the vertices in the overlapped region are absorbed and the edge capacities in the overlapped region are summed. It is worth noting that the merging method can also be applied to N-D graphs (subgraphs), just like the splitting method of the parallel BK-algorithm.

Fig. \ref{fig:twoSubgraphs} shows the application of the merging method in the parallel BK-algorithm with only two subgraphs. When there is evidence that the two subgraphs may have difficulties reaching agreement on their optimal values, the merging operation is invoked to obtain a merged graph $G'$ from which the optimal values $\accentset{*}{\mathbf{x}}_{\mathcal{V}}$ are obtained. This naturally raises the following key question: will the merged graph $G'$ obtain the same optimal solutions as the original graph $G$, despite that the original was first split into two subgraphs, both of which then experienced maxflow computation and updating of many rounds? 

\begin{figure*}[htbp]
    \begin{center}
	\includegraphics[width=0.99\linewidth]{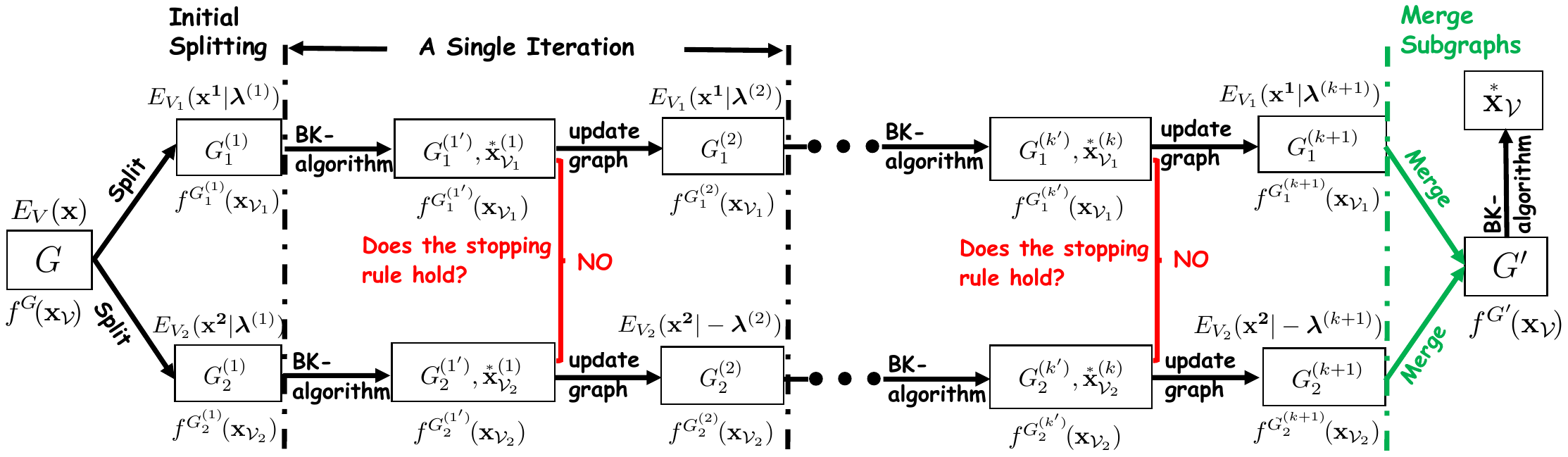}
    \end{center}
    \caption{Application of the merging method in the parallel BK-algorithm with only two subgraphs.}
    \label{fig:twoSubgraphs}
\end{figure*}
There are two related questions: How should one decide which subgraphs to merge? What is the best time for merging? Although there are many cues for over-splitting, such as the existence of a large number of vertices within the overlapped region that admit multiple optimal solutions or the scarcity of pushed flows in the subgraphs, there is currently no clear-cut answer. However, we use a simple strategy based on the following assumption: the number of vertices in the overlapped region that disagree on their optimal values is expected to decrease with iterations if the current graph splitting is proper. The strategy is detailed in Algorithm \ref{alg:naiveConvergedParallel}, where $numDiff$ is used to record the minimal number of disagreement vertices of all the iterations. If $numDiff$ remains non-decreasing in some successive iterations, which is controlled by $\mathrm{ITER}$, then merging should be invoked to ensure better graph splitting. In the merging operation, every two adjacent subgraphs are merged into one graph in order to maintain work load balance. Clearly, the number of disagreement vertices will decrease at least by 1 for every $\mathrm{ITER}-1$ iterations, or every two adjacent subgraphs are merged. Therefore, Algorithm \ref{alg:naiveConvergedParallel} will converge within no more than $M\times(\mathrm{ITER}-1) + \log_2^{N}$ iterations, where $M$ is the number of disagreement vertices in the first iteration and $N$ is the number of subgraphs into which the original graph is split in the initial splitting stage. Since a naive merging strategy is used and the convergence guarantee holds, Algorithm \ref{alg:naiveConvergedParallel} is named the ``Naive Converged Parallel BK-Algorithm''. Obviously, more-suitable merging strategies can further boost the performance, this issue will be investigated in our further work.

\begin{algorithm}[htbp]                      
	\caption{Naive Converged Parallel BK-Algorithm}          
	\label{alg:naiveConvergedParallel}                           
	\begin{algorithmic}[1]                    
		\STATE \textbf{{\bf Set} $numDiff:=+\infty$;  $iter:=0$;}
		\STATE {Split graph $G$ into $N$ overlapped subgraphs $G_1,\dots,G_N$;}
		\REPEAT
		\STATE{Run the BK-algorithm concurrently on all the subgraphs $G_i^{(k)}$ to obtain $\accentset{*}{\mathbf{x}}_{\mathcal{V}_i}^{(k)}$£» } 
		\STATE{Update all the residual subgraphs $G_i^{(k')}$ to obtain $G_i^{(k+1)}$;}
		\STATE{Count the number of nodes that disagree on their optimal values, denoted as $nDiff$;}
		\IF{$nDiff \ge numDiff$}
		\STATE{$++iter$;}
		\IF{$iter == \mathrm{ITER}$}
		\STATE{Every two neighboring subgraphs are merged into one graph;}
		\ENDIF
		\ELSE
		\STATE{$numDiff:=nDiff$;}
		\STATE{$iter:=0$;}
		\ENDIF		
		\UNTIL{$nDiff == 0$;}
		\end{algorithmic}
\end{algorithm}

\section{Pseudo-boolean representation-based invariance analysis for graph cuts algorithms}
As noted in the previous section, the correctness of our merging methods depends on whether the merged graph obtains the same optimal solutions as that of the original graph. To answer this question, we propose a new pseudo-boolean representation, named \emph{restricted homogeneous posiforms}, to track the changes for all the graphs (subgraphs) under the operations of the parallel graph cuts algorithms. We then develop an invariance analysis method.

\subsection{Restricted homogeneous posiforms for graph cuts}

Since each $s-t$ cut in a graph is a partition of all its vertices to either $S$ or $T$, it can be represented by a realization of a set of boolean variables, $\mathbf{x}_{\mathcal{V}}=\{x_i\vert i\in\mathcal{V}\}$. Therefore, a function with its arguments being boolean to represent an $s-t$ cut and its value being real representing the cost of this $s-t$ cut can be used to express a graph cuts problem. Such a function is usually called a pseudo-boolean function in the combinatorial optimization community~\cite{boros2002pseudo,ivuanescu1965some,Boros06preprocessingof}, whose representations are often categorized into two types: \emph{multi-linear polynomials} whose form is uniquely determined by a pseudo-boolean function and \emph{posiforms} that can uniquely determine a pseudo-boolean function, although a pseudo-boolean function can have many different posiforms representing it. Among these many possible posiforms, we define the \emph{restricted homogeneous posiforms}, which can have a one-to-one correspondence with the graph on which the graph cuts problem is defined. 
\begin{thm}
	\label{thm:homogeneousRep}
	A graph cuts problem defined on graph $G(V,C)$, with $V=\{s,t\}\cup\mathcal{V}$ and non-negative edge capacities, can be uniquely represented using the following restricted homogeneous posiforms:
	\begin{equation}
	\label{eq:homogeneousRep}
	\phi^G_h(\mathbf{x}_\mathcal{V}) = \sum_{i\in\mathcal{V}}a_ix_i + \sum_{j\in\mathcal{V}}a_j\bar x_j + \sum_{i,j\in\mathcal{V}}a_{ij}\bar x_ix_j,
	\end{equation}
	where $x_i\in\{0,1\},\bar x_i=1-x_i, a_i\ge0,a_j\ge0,a_{ij}\ge0$. Moreover, the components of $\phi^G_h(\mathbf{x}_\mathcal{V})$ have the following one-to-one correspondence with edges of graph $G(V,C)$, as shown in Table \ref{tab:correspondence}.
	\begin{table}[H]
		\centering
		\caption{The correspondence of components of $\phi^G_h$ and edges of $G$, where $x_i=0$ implies vertex $i$ belongs to $S$ and $x_i=1$ implies vertex $i$ belongs to $T$ in an $s-t$ cut.}
		\label{tab:correspondence}
		\begin{tabular}{c|c|c}
			\toprule
			Comp. $\phi^G_h(\mathbf{x}_\mathcal{V})$ & Edge $G(V,C)$ & Relations\\
			\midrule
			$a_ix_i$ & $(s,i)$ & $a_i=c_{si},\,\, c_{si}\in C$ \\
			$a_j\bar x_j$ & $(j,t)$ & $a_j=c_{jt},\,\, c_{jt}\in C$ \\
			$a_{ij}\bar x_ix_j$ & $(i,j)$ & $a_{ij}=c_{ij},\,\, c_{ij}\in C$ \\
			\bottomrule
		\end{tabular}
	\end{table}
\end{thm}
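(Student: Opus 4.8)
The plan is to prove the two assertions of the theorem separately: first that the posiform in \eqref{eq:homogeneousRep}, with the coefficients prescribed by Table~\ref{tab:correspondence}, actually represents the graph cuts cost, and second that this representation, read off through that correspondence, is the unique restricted homogeneous posiform attached to $G(V,C)$. First I would fix the dictionary between boolean assignments and $s$-$t$ cuts: every realization $\mathbf{x}_\mathcal{V}\in\{0,1\}^{\mathcal V}$ is identified with the partition in which $x_i=0$ places $i$ in $S$ and $x_i=1$ places $i$ in $T$, with $s\in S$ and $t\in T$ by convention. Under the standard directed-cut convention an edge $(u,v)$ contributes its capacity to the cut cost exactly when $u\in S$ and $v\in T$, so the cost of the cut is $\sum_{(u,v)} c_{uv}\,[u\in S,\,v\in T]$.

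The heart of the existence part is then a term-by-term identification of these edge indicators with the three monomial types allowed in \eqref{eq:homogeneousRep}. A source edge $(s,i)$ is cut iff $i\in T$, i.e.\ iff $x_i=1$, giving indicator $x_i$; a sink edge $(j,t)$ is cut iff $j\in S$, i.e.\ iff $\bar x_j=1$, giving indicator $\bar x_j$; and an internal edge $(i,j)$ is cut iff $i\in S$ and $j\in T$, i.e.\ iff $\bar x_i x_j=1$, giving indicator $\bar x_i x_j$. Substituting $a_i=c_{si}$, $a_j=c_{jt}$, $a_{ij}=c_{ij}$ and summing reproduces $\phi^G_h(\mathbf{x}_\mathcal V)$ exactly, so this posiform evaluates to the cut cost on every assignment; since all capacities are non-negative, so are all coefficients, confirming that \eqref{eq:homogeneousRep} is a genuine posiform of the stated restricted homogeneous type.

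For uniqueness I would argue directly from the structure of the admissible monomials rather than from the function: the restricted homogeneous form permits \emph{only} the monomials $x_i$, $\bar x_j$ and $\bar x_i x_j$, and Table~\ref{tab:correspondence} pairs each of these with exactly one edge type $(s,i)$, $(j,t)$, $(i,j)$, with the coefficient forced to equal the corresponding capacity. This pairing is manifestly invertible at the level of monomials-versus-edges, so reading the capacities off $G$ yields one and only one posiform, and conversely reading the coefficients off the posiform reconstructs $G$; this is precisely the claimed one-to-one correspondence.

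I expect the only real subtlety, and thus the main obstacle, to be pinning down the sense in which the representation is ``unique''. It is \emph{not} unique as a posiform of the cut-cost \emph{function}: expanding \eqref{eq:homogeneousRep} into its (unique) multilinear polynomial shows that only the symmetric combinations $a_{ij}+a_{ji}$ and certain linear combinations of the $a_i,a_j$ are recoverable, so many restricted homogeneous posiforms share the same function. The uniqueness that actually holds, and the one the subsequent invariance analysis needs, is the \emph{syntactic} bijection between a graph and its posiform furnished by Table~\ref{tab:correspondence}; the proof should state this explicitly so that the degeneracy at the function level does not undercut the claim.
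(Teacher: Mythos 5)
Your proposal is correct and follows essentially the same route as the paper: both fix the dictionary $x_i=0\iff i\in S$, $x_i=1\iff i\in T$ and verify term by term that each of the three monomial types $a_ix_i$, $a_j\bar x_j$, $a_{ij}\bar x_ix_j$ reproduces exactly the cut contribution of its corresponding edge $(s,i)$, $(j,t)$, $(i,j)$ under every realization, with uniqueness read off from the syntactic bijection of Table~\ref{tab:correspondence}. Your closing caveat is also well taken and correct: the bijection is between graphs and posiforms, not between cut-cost \emph{functions} and posiforms (e.g.\ $\bar x_ix_j+x_i$ and $\bar x_jx_i+x_j$ are distinct restricted homogeneous posiforms defining the same function), a distinction the paper leaves implicit when it asserts that verifying the componentwise equivalence ``suffices.''
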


\begin{proof}	
	The cost of an $s-t$ cut is 
	\begin{equation}
	\label{eq:costSTCut}
	C_{S,T}=\sum_{i\in S, j\in T}c_{ij}
	\end{equation}
	and the relations between the values of $x_i$ and the parts to which the node $i$ belongs are:
	\begin{equation}
	x_i = 
	\begin{cases}
	0 &\!\!\!\iff i \in S \\
	1 &\!\!\!\iff i \in T
	\end{cases}.
	\label{eq:relationSTCut}
	\end{equation}
    Only if $\phi^G_h(\mathbf{x}_\mathcal{V})$ equals the cost, up to a constant, of the $s-t$ cut defined by the realization of $\mathbf{x}_\mathcal{V}$, for every possible realizations of $\mathbf{x}_\mathcal{V}$, can  
    $\phi^G_h(\mathbf{x}_\mathcal{V})$ be the pseudo-boolean function of the graph cuts problem defined on graph $G(V,C)$. Since Table \ref{tab:correspondence} defines a one-to-one relationship between the components of the restricted homogeneous posiforms and the edges of the graph $G(V,C)$, it suffices to verify the equivalence between the value of each component of $\phi^G_h(\mathbf{x}_\mathcal{V})$ and the cost of its corresponding edge in the $s-t$ cut under all the possible realizations of $\mathbf{x}_\mathcal{V}$. This can be done for the three components listed in Table \ref{tab:correspondence}. 

	1). $a_ix_i$
	
	The value of $a_ix_i$ is  
	$
	\begin{cases}
	0 &\!\!\!\iff x_i =0 \\
	a_i &\!\!\!\iff x_i = 1
	\end{cases}
	$, 
	its corresponding edge $(s,i)$ contributes to the cost of an $s-t$ cut by: 
	$
	\begin{cases}
	0 &\!\!\!\iff i \in S \\
	c_{si} &\!\!\!\iff i \in T
	\end{cases}
	$,
	which can be seen from the definition of the cost of an $s-t$ cut in (\ref{eq:costSTCut}). From the relations of (\ref{eq:relationSTCut}) and the equivalence of $a_i$ and $c_{si}$, it is clear that the value of $a_ix_i$ equals the cost of its corresponding edge $(s,i)$ in the $s-t$ cut under all the possible realizations of $\mathbf{x}_\mathcal{V}$.
	
	2). $a_j\bar x_j$
	
	It can be verified in a similar way as in the previous case that the value of $a_j\bar x_j$ equals the cost of its corresponding edge $(j,t)$ in the $s-t$ cut under all the possible realizations of $\mathbf{x}_\mathcal{V}$.
	
	3). $a_{ij}\bar x_ix_j$
	
	The value of $a_{ij}\bar x_ix_j$ is
	$
	\begin{cases}
	a_{ij} &\!\!\!\iff x_i =0, x_j = 1\\
	0 &\!\!\!\iff otherwise
	\end{cases}
	$.
	Its corresponding edge $(i,j)$ contributes to the cost of an $s-t$ cut by:
	$
	\begin{cases}
	c_{ij} &\!\!\!\iff i \in S, j \in T \\
	0 &\!\!\!\iff otherwise
	\end{cases}
	$.
	From the relations of (\ref{eq:relationSTCut}) and the equivalence of $a_{ij}$ and $c_{ij}$, it is clear that the value of $a_{ij}\bar x_ix_j$ equals the cost of its corresponding edge $(i,j)$ in the $s-t$ cut under all the possible realizations of $\mathbf{x}_\mathcal{V}$.

	For each of the three components in the restricted homogeneous posiforms (\ref{eq:homogeneousRep}), its value under all the possible realizations of $\mathbf{x}_\mathcal{V}$ equals the contribution of its corresponding edge to the cost of the $s-t$ cut. Therefore, Theorem \ref{thm:homogeneousRep} is proved.
\end{proof}

Theorem \ref{thm:homogeneousRep} states that each unary term in the restricted homogeneous posiforms, as in (\ref{eq:homogeneousRep}), corresponds to an t-link edge in the graph, whereas each pairwise term corresponds to an n-link edge, and their coefficients are the capacities of those edges in the graph. Therefore, any changes in the graph will be reflected in its restricted homogeneous posiforms, and vice versa.

In addition to the above proposed restricted homogeneous posiforms, a pseudo-boolean function can also be uniquely represented as a multi-linear polynomial~\cite{boros2002pseudo,ivuanescu1965some,Boros06preprocessingof}:
\begin{equation}
\label{eq:polynomial}
f^G(\mathbf{x}) = l_0 + \sum_{j=1}^nl_ix_i + \sum_{ij} l_{ij}x_ix_j,\,\,\, x_i\in\{0,1\},
\end{equation}
with $l_{ij} \le 0$. In fact, $l_{ij} \le 0$ is the necessary and sufficient condition for $f^G(\mathbf{x})$ to represent a graph cuts problem defined on a graph $G$ with non-negative edge capacities, as proved by Freeman and Drineas~\cite{Freedman05}

\begin{figure}[!htbp]
	\begin{center}
		\subfloat[Partition of the set $\mathcal{V}$]{\label{fig:setPartition}\includegraphics[height=0.48\linewidth]{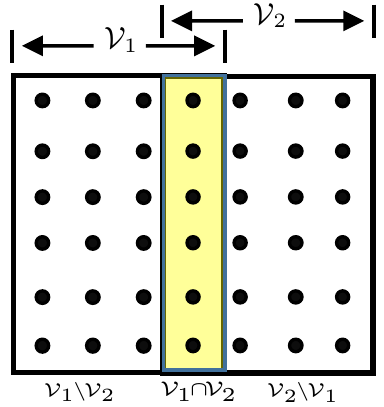}}\quad\,
		\subfloat[$G(V,C)$]{\label{fig:mergedGraph}\includegraphics[height=0.48\linewidth]{mergedGraph_aurora.pdf}}\\
		\subfloat[$G_1(V_1,C_1)$]{\label{fig:splitGraphL}\includegraphics[height=0.48\linewidth]{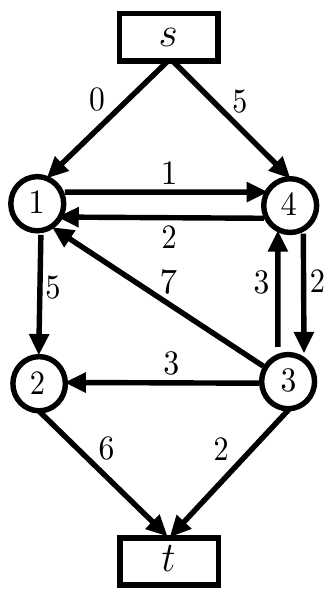}}\qquad\qquad\qquad
		\subfloat[$G_2(V_2,C_2)$]{\label{fig:splitGraphR}\includegraphics[height=0.48\linewidth]{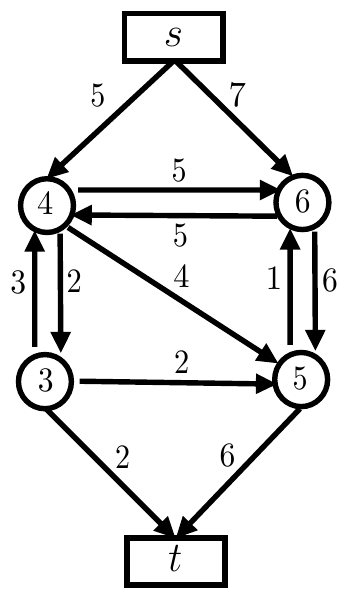}}
	\end{center}
	\caption{Splitting graph $G(V,C)$ into two subgraphs $G_1(V_1,C_1)$ and $G_2(V_2,C_2)$ .}
	\label{fig:graphSplitting}
\end{figure}
\subsection{Invariance analysis for parallel graph cuts algorithm}
By the above multi-linear polynomials and proposed restricted homogeneous posiforms, here we give an invariance analysis method for parallel graph cuts algorithm.

There are four types of operations in the naive converged parallel BK-algorithm: splitting, maxflow computation, graph updating and merging subgraphs. The first three types of operations are also used in the parallel BK-algorithm. Fig. \ref{fig:twoSubgraphs} is the pipeline of the naive converged parallel BK-algorithm in the two subgraphs case. Without loss of generality, only the two subgraphs case is analyzed here, but the conclusions can be easily generalized to more subgraphs cases. 

Suppose the original graph is $G$, the two subgraphs in the $k$th iteration are $G_1^{(k)}, G_2^{(k)}$, and their corresponding restricted homogeneous posiforms are $\phi^G_h(\mathbf{x}_\mathcal{V})$, $\phi^{G_1^{(k)}}_h(\mathbf{x}_{\mathcal{V}_1})$ and $\phi_h^{G_2^{(k)}}(\mathbf{x}_{\mathcal{V}_2})$ respectively. Since $V=\{s,t\}\cup\mathcal{V}$ and $\mathcal{V}_1\setminus\mathcal{V}_2$, $\mathcal{V}_1\cap\mathcal{V}_2$ and $\mathcal{V}_2\setminus\mathcal{V}_1$ are partitions of $\mathcal{V}$, as shown in Fig. \subref*{fig:setPartition}. The three restricted homogeneous posiforms can be equally written as:

\begin{equation}
\begin{split}
&{}\phi^G_h(\mathbf{x}_{\mathcal{V}})   =
\sum_{i\in \mathcal{V}_1\!\setminus\mathcal{V}_2}\!\!\!\!\widehat{\underaccent{\dot}{a}}_ix_i + \!\!\sum_{i\in \mathcal{V}_1\cap\mathcal{V}_2}\!\!\!\!\bar{\underaccent{\dot}{a}}_i x_i +\!\! \sum_{i\in \mathcal{V}_2\!\setminus\mathcal{V}_1}\!\!\!\!\widetriangle{\underaccent{\dot}{a}}_ix_i +  \\
&{}\sum_{i\in \mathcal{V}_1\!\setminus\mathcal{V}_2}\!\!\!\!\widehat{\underaccent{\ddot}{a}}_i\bar x_i + \!\!\sum_{i\in \mathcal{V}_1\cap\mathcal{V}_2}\!\!\!\!\bar{\underaccent{\ddot}{a}}_i\bar x_i +\!\! \sum_{i\in \mathcal{V}_2\!\setminus\mathcal{V}_1}\!\!\!\!\widetriangle{\underaccent{\ddot}{a}}_i\bar x_i + \\
&{}\!\!\sum_{i,j\in\mathcal{V}_1\cap\mathcal{V}_2}\!\!\!\!\bar a_{ij}\bar x_ix_j +\!\! \sum_{i\!\text{ \it or }\!j\in \mathcal{V}_1\!\setminus\mathcal{V}_2}\!\!\!\!\widehat a_{ij}\bar x_ix_j +\!\! \sum_{i\!\text{ \it or }\!j\in \mathcal{V}_2\!\setminus\mathcal{V}_1}\!\!\!\!\widetriangle a_{ij}\bar x_ix_j\,,
\end{split}
\end{equation}

\begin{equation}
\begin{split}
\phi^{G_1^{(k)}}_h(\mathbf{x}_{\mathcal{V}_1})  &= \sum_{i\in \mathcal{V}_1\!\setminus\mathcal{V}_2}\!\!\!\!\widehat{\underaccent{\dot}{a}}^{1^{(k)}}_ix_i + \!\!\sum_{i\in \mathcal{V}_1\cap\mathcal{V}_2}\!\!\!\!\bar{\underaccent{\dot}{a}}^{1^{(k)}}_ix_i + \\
&\mathrel{\phantom{=}} \sum_{i\in \mathcal{V}_1\!\setminus\mathcal{V}_2}\!\!\!\!\widehat{\underaccent{\ddot}{a}}^{1^{(k)}}_i\bar x_i + \!\!\sum_{i\in \mathcal{V}_1\cap\mathcal{V}_2}\!\!\!\!\bar{\underaccent{\ddot}{a}}^{1^{(k)}}_i\bar x_i + \\
&\mathrel{\phantom{=}}\!\!\! \sum_{i,j\in\mathcal{V}_1\cap\mathcal{V}_2}\!\!\!\!\bar a^{1^{(k)}}_{ij}\bar x_ix_j +\!\! \sum_{i\!\text{ \it or }\!j\in \mathcal{V}_1\!\setminus\mathcal{V}_2}\!\!\!\!\widehat a^{1^{(k)}}_{ij}\bar x_ix_j\,,
\end{split}
\end{equation}

\begin{equation}
\begin{split}
\phi^{G_2^{(k)}}_h(\mathbf{x}_{\mathcal{V}_2})  &= \sum_{i\in \mathcal{V}_2\!\setminus\mathcal{V}_1}\!\!\!\!\widetriangle{\underaccent{\dot}{a}}^{2^{(k)}}_ix_i + \!\!\sum_{i\in \mathcal{V}_1\cap\mathcal{V}_2}\!\!\!\!\bar{\underaccent{\dot}{a}}^{2^{(k)}}_ix_i + \\
&\mathrel{\phantom{=}} \sum_{i\in \mathcal{V}_2\!\setminus\mathcal{V}_1}\!\!\!\!\widetriangle{\underaccent{\ddot}{a}}^{2^{(k)}}_i\bar x_i + \!\!\sum_{i\in \mathcal{V}_1\cap\mathcal{V}_2}\!\!\!\!\bar{\underaccent{\ddot}{a}}^{2^{(k)}}_i\bar x_i + \\
&\mathrel{\phantom{=}}\!\!\! \sum_{i,j\in\mathcal{V}_1\cap\mathcal{V}_2}\!\!\!\!\bar a^{2^{(k)}}_{ij}\bar x_ix_j +\!\! \sum_{i\!\text{ \it or }\!j\in \mathcal{V}_2\!\setminus\mathcal{V}_1}\!\!\!\!\widetriangle a^{2^{(k)}}_{ij}\bar x_ix_j\,\,. 
\end{split}
\end{equation}

The multi-linear polynomials of graphs $G$, $G_1^{(k)}$ and $G_2^{(k)}$, denoted as $f^{G}(\mathbf{x}_{\mathcal{V}})$, $f^{G_1^{(k)}}(\mathbf{x}_{\mathcal{V}_1})$ and $f^{G_2^{(k)}}(\mathbf{x}_{\mathcal{V}_2})$ respectively, can be expressed as:
\begin{equation}
\begin{split}
&{}f^{G}(\mathbf{x}_{\mathcal{V}}) = l + \!\! \sum_{i\in\mathcal{V}_1\!\setminus \mathcal{V}_2}\!\!\widehat{l}_ix_i + \!\! \sum_{i\in\mathcal{V}_1\cap \mathcal{V}_2}\!\!\bar l_ix_i +  \!\! \sum_{i\in\mathcal{V}_2\!\setminus\mathcal{V}_1}\!\!\widetriangle l_ix_i +\\
&{}\sum_{i,j\in\mathcal{V}_1\cap\mathcal{V}_2}\!\!\!\!\bar l_{ij}x_ix_j +\!\! \sum_{i\!\text{ \it or }\!j\in\mathcal{V}_1\!\setminus \mathcal{V}_2}\!\!\!\!\widehat l_{ij}x_ix_j +\!\! \sum_{i\!\text{ \it or }\!j\in\mathcal{V}_2\!\setminus \mathcal{V}_1}\!\!\!\!\widetriangle l_{ij}x_ix_j \,\,,
\end{split}
\end{equation}

\begin{equation}
\begin{split}
f^{G_1^{(k)}}(\mathbf{x}_{\mathcal{V}_1}) &= l^{1^{(k)}}\!\! + \!\! \sum_{i\in\mathcal{V}_1\!\setminus\mathcal{V}_2}\!\!\widehat l^{1^{(k)}}_ix_i + \!\! \sum_{i\in\mathcal{V}_1\cap \mathcal{V}_2}\!\!\bar l^{1^{(k)}}_ix_i +\\
&\sum_{i,j\in\mathcal{V}_1\cap \mathcal{V}_2}\!\!\!\!\bar l^{1^{(k)}}_{ij}x_ix_j +\!\! \sum_{i\!\text{ \it or }\!j\in\mathcal{V}_1\!\setminus\mathcal{V}_2}\!\!\!\!\widehat l^{1^{(k)}}_{ij}x_ix_j\,\,, 
\end{split}
\end{equation}

\begin{equation}
\begin{split}
f^{G_2^{(k)}}(\mathbf{x}_{\mathcal{V}_2}) &= l^{2^{(k)}}\!\! + \!\! \sum_{i\in\mathcal{V}_2\!\setminus\mathcal{V}_1}\!\!\!\!\widetriangle l^{2^{(k)}}_ix_i + \!\! \sum_{i\in\mathcal{V}_1\cap\mathcal{V}_2}\!\!\!\!\bar l^{2^{(k)}}_ix_i +\\
&\sum_{i,j\in\mathcal{V}_1\cap\mathcal{V}_2}\!\!\!\!\!\!\bar l^{2^{(k)}}_{ij}x_ix_j +\!\! \sum_{i\!\text{ \it or }\!j\in\mathcal{V}_2\!\setminus\mathcal{V}_1}\!\!\!\!\!\!\widetriangle l^{2^{(k)}}_{ij}x_ix_j\,\,. 
\end{split}
\end{equation}

It is easy to verify that the following relations hold for the coefficients of $f^{G}(\mathbf{x}_{\mathcal{V}})$ and of $\phi^G_h(\mathbf{x}_\mathcal{V})$:
\begin{equation}
\label{eq:relationFPhi^}
l = \sum_{i\in\mathcal{V}_1\!\setminus\!\mathcal{V}_2}\!\!\widehat{\underaccent{\ddot}{a}}_i + \sum_{i\in\mathcal{V}_1\cap\mathcal{V}_2}\!\!\bar{\underaccent{\ddot}{a}}_i + \sum_{i\in\mathcal{V}_2\!\setminus\!\mathcal{V}_1}\!\!\widetriangle{\underaccent{\ddot}{a}}_i\,\,,
\end{equation}

\begin{equation}
\widehat l_i = \widehat {\underaccent{\dot}{a}}_i - \widehat{\underaccent{\ddot}{a}}_i + \sum_{k\in\mathcal{V}_1}\widehat a_{ki}, \forall i\in\mathcal{V}_1\!\setminus\!\mathcal{V}_2\,\,,
\end{equation}

\begin{equation}
\bar l_i = \bar {\underaccent{\dot}{a}}_i -\bar {\underaccent{\ddot}{a}}_i + \sum_{k\in\mathcal{V}_1\cap\mathcal{V}_2}\!\!\!\!\bar a_{ki}, \forall i\in\mathcal{V}_1\!\cap\!\mathcal{V}_2\,\,,
\end{equation}

\begin{equation}
\widetriangle l_i = \widetriangle {\underaccent{\dot}{a}}_i - \widetriangle{\underaccent{\ddot}{a}}_i + \sum_{k\in\mathcal{V}_2}\widetriangle a_{ki},\forall i\in \mathcal{V}_2\!\setminus\!\mathcal{V}_1\,\,,
\end{equation}

\begin{equation}
\bar l_{ij} = -\bar a_{ij},\forall i,j\in\mathcal{V}_1\!\cap\!\mathcal{V}_2\,\,,
\end{equation}

\begin{equation}
\widehat l_{ij} = -\widehat a_{ij},\forall i\!\text{ \it or }\!j\in\mathcal{V}_1\!\setminus\!\mathcal{V}_2\,\,,
\end{equation}

\begin{equation}
\label{eq:relationFPhi$}
\widetriangle l_{ij} = -\widetriangle a_{ij},\forall i\!\text{ \it or }\!j\in \mathcal{V}_2\!\setminus\!\mathcal{V}_1\,\,.
\end{equation}

Similarly, the following relations hold for the coefficients of $f^{G_1^{(k)}}(\mathbf{x}_{\mathcal{V}_1})$ and of $\phi^{G_1^{(k)}}_h(\mathbf{x}_{\mathcal{V}_1})$:
\begin{equation}
\label{eq:relationF1Phi1^}
l^{1^{(k)}} = \sum_{i\in\mathcal{V}_1\!\setminus\!\mathcal{V}_2}\!\!\widehat{\underaccent{\ddot}{a}}_i^{1^{(k)}} + \sum_{i\in\mathcal{V}_1\cap\mathcal{V}_2}\!\!\bar{\underaccent{\ddot}{a}}_i^{1^{(k)}}\,\,,
\end{equation}

\begin{equation}
\widehat l_i^{1^{(k)}} = \widehat {\underaccent{\dot}{a}}_i^{1^{(k)}}\!\! - \widehat{\underaccent{\ddot}{a}}_i^{1^{(k)}} + \sum_{k\in\mathcal{V}_1}\widehat a_{ki}^{1^{(k)}}, \forall i\in\mathcal{V}_1\!\setminus\!\mathcal{V}_2\,\,,
\end{equation}

\begin{equation}
\label{eq:relationF1Phi1barNode}
\bar l_i^{1^{(k)}} = \bar {\underaccent{\dot}{a}}_i^{1^{(k)}}\!\! -\bar {\underaccent{\ddot}{a}}_i^{1^{(k)}} + \sum_{k\in\mathcal{V}_1\cap\mathcal{V}_2}\!\!\!\!\bar a_{ki}^{1^{(k)}}, \forall i\in\mathcal{V}_1\!\cap\!\mathcal{V}_2\,\,,
\end{equation}

\begin{equation}
\bar l_{ij}^{1^{(k)}} = -\bar a_{ij}^{1^{(k)}},\forall i,j\in\mathcal{V}_1\!\cap\!\mathcal{V}_2\,\,,
\end{equation}

\begin{equation}
\label{eq:relationF1Phi1$}
\widehat l_{ij}^{1^{(k)}} = -\widehat a_{ij}^{1^{(k)}},\forall i\!\text{ \it or }\!j\in\mathcal{V}_1\!\setminus\!\mathcal{V}_2\,\,.
\end{equation}

And those between $f^{G_2^{(k)}}(\mathbf{x}_{\mathcal{V}_2})$ and $\phi_h^{G_2^{(k)}}(\mathbf{x}_{\mathcal{V}_2})$:
\begin{equation}
\label{eq:relationF2Phi2^}
l^{2^{(k)}} = \sum_{i\in\mathcal{V}_2\!\setminus\!\mathcal{V}_1}\!\!\widehat{\underaccent{\ddot}{a}}_i^{2^{(k)}} + \sum_{i\in\mathcal{V}_1\cap\mathcal{V}_2}\!\!\bar{\underaccent{\ddot}{a}}_i^{2^{(k)}}\,\,,
\end{equation}

\begin{equation}
\widehat l_i^{2^{(k)}} = \widehat {\underaccent{\dot}{a}}_i^{2^{(k)}}\!\! - \widehat{\underaccent{\ddot}{a}}_i^{2^{(k)}} + \sum_{k\in\mathcal{V}_2}\widehat a_{ki}^{2^{(k)}}, \forall i\in\mathcal{V}_2\!\setminus\!\mathcal{V}_1\,\,,
\end{equation}

\begin{equation}
\label{eq:relationF2Phi2barNode}
\bar l_i^{2^{(k)}} = \bar {\underaccent{\dot}{a}}_i^{2^{(k)}}\!\! -\bar {\underaccent{\ddot}{a}}_i^{2^{(k)}} + \sum_{k\in\mathcal{V}_1\cap\mathcal{V}_2}\!\!\!\!\bar a_{ki}^{2^{(k)}}, \forall i\in\mathcal{V}_1\!\cap\!\mathcal{V}_2\,\,,
\end{equation}

\begin{equation}
\bar l_{ij}^{2^{(k)}} = -\bar a_{ij}^{2^{(k)}},\forall i,j\in\mathcal{V}_1\!\cap\!\mathcal{V}_2\,\,,
\end{equation}

\begin{equation}
\label{eq:relationF2Phi2$}
\widehat l_{ij}^{2^{(k)}} = -\widehat a_{ij}^{2^{(k)}},\forall i\!\text{ \it or }\!j\in\mathcal{V}_2\!\setminus\!\mathcal{V}_1\,\,.
\end{equation}

Thus far, the relations between the coefficients of the restricted homogeneous posiforms $\phi^G_h(\mathbf{x}_\mathcal{V})$, $\phi^{G_1^{(k)}}_h(\mathbf{x}_{\mathcal{V}_1})$, $\phi_h^{G_2^{(k)}}(\mathbf{x}_{\mathcal{V}_2})$ and those of the multi-linear polynomials $f^{G}(\mathbf{x}_{\mathcal{V}})$, $f^{G_1^{(k)}}(\mathbf{x}_{\mathcal{V}_1})$, $f^{G_2^{(k)}}(\mathbf{x}_{\mathcal{V}_2})$ have been established. The following four propositions state the changes of the multi-linear polynomials for each of the four operations in Fig. \ref{fig:twoSubgraphs}.

\begin{prop}
	\label{prop:split}
	The coefficients of $f^{G_1^{(1)}}(\mathbf{x}_{\mathcal{V}_1})$, $f^{G_2^{(1)}}(\mathbf{x}_{\mathcal{V}_2})$ and $f^G(\mathbf{x}_{\mathcal{V}})$, which are the multi-linear polynomials of the split subgraph $G_1^{(1)}$, $G_2^{(1)}$ and the original graph $G$, respectively, satisfy the following relations:
	\begin{myEnumerate}
		\item $ l^{1^{(1)}}\!\! + l^{2^{(1)}} = l.$ 
		\item $ \widehat l^{1^{(1)}}_i \!\!= \widehat l_i , \forall i\in\mathcal{V}_1\!\setminus\! \mathcal{V}_2;\quad$   $\widehat l_{ij}^{1^{(1)}}\!\! = \widehat l_{ij},\forall i\!\text{ \it or }\!j\in\mathcal{V}_1\!\setminus\!\mathcal{V}_2.$
		\item  $ \widetriangle l^{2^{(1)}}_i \!\!= \widetriangle l_i , \forall i\in\mathcal{V}_2\!\setminus\!\mathcal{V}_1;\quad$ $\widetriangle l_{ij}^{2^{(1)}} \!\!= \widetriangle l_{ij},\forall i\!\text{ \it or }\!j\in\mathcal{V}_2\!\setminus\!\mathcal{V}_1.$
		\item $\bar l^{1^{(1)}}_i\!\! =\bar l^{2^{(1)}}_i =\frac{1}{2} \bar l_i , \forall i\in\mathcal{V}_1\!\cap\!\mathcal{V}_2;\quad$ $\bar l_{ij}^{1^{(1)}}\!\! = \bar l_{ij}^{2^{(1)}}\!\! =\frac{1}{2} \bar l_{ij},\forall i,j\in\mathcal{V}_1\!\cap\!\mathcal{V}_2.$
	\end{myEnumerate}
\end{prop}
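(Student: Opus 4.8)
The plan is to reduce the statement---which concerns the coefficients of the multi-linear polynomials---to relations among the posiform coefficients (equivalently, the edge capacities), and then transport those relations through the posiform-to-polynomial dictionary already established above for $f^{G}$, $f^{G_1^{(k)}}$ and $f^{G_2^{(k)}}$. The only genuinely new ingredient is an explicit description of how the splitting operation acts on the capacities; everything afterward is routine substitution into linear identities.

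First I would record the capacity-level effect of the initial split. At the first iteration the dual variable is initialized to $\bm\lambda=\bm 0$, so the subproblems $E_{V_1}(\mathbf{x}^1\vert\bm\lambda)$ and $E_{V_2}(\mathbf{x}^2\vert-\bm\lambda)$ carry no dual contribution and the split is simply the even division of the energy of $G$ between $G_1^{(1)}$ and $G_2^{(1)}$. By Theorem~\ref{thm:homogeneousRep} the posiform coefficients coincide with the edge capacities, so this division is exactly the inverse of the merging rule of Section~\ref{sec:mergingMethod}: a t-link or n-link capacity incident to $\mathcal{V}_1\!\setminus\!\mathcal{V}_2$ is passed unchanged to $G_1^{(1)}$, the symmetric capacities are passed unchanged to $G_2^{(1)}$, and every capacity internal to the overlap $\mathcal{V}_1\!\cap\!\mathcal{V}_2$ is halved with one half placed in each subgraph. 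At this point I would invoke the separable condition to ensure these three cases are exhaustive and disjoint: no positive-capacity edge joins $\mathcal{V}_1\!\setminus\!\mathcal{V}_2$ directly to $\mathcal{V}_2\!\setminus\!\mathcal{V}_1$, since no single subgraph would then contain both of its endpoints. In posiform notation this yields $\widehat{\underaccent{\dot}{a}}_i^{1^{(1)}}=\widehat{\underaccent{\dot}{a}}_i$, $\widehat{\underaccent{\ddot}{a}}_i^{1^{(1)}}=\widehat{\underaccent{\ddot}{a}}_i$, $\widehat a_{ij}^{1^{(1)}}=\widehat a_{ij}$ (with the $\widetriangle$ analogues for $G_2^{(1)}$), together with $\bar{\underaccent{\dot}{a}}_i^{1^{(1)}}=\bar{\underaccent{\dot}{a}}_i^{2^{(1)}}=\tfrac{1}{2}\bar{\underaccent{\dot}{a}}_i$, $\bar{\underaccent{\ddot}{a}}_i^{1^{(1)}}=\bar{\underaccent{\ddot}{a}}_i^{2^{(1)}}=\tfrac{1}{2}\bar{\underaccent{\ddot}{a}}_i$ and $\bar a_{ij}^{1^{(1)}}=\bar a_{ij}^{2^{(1)}}=\tfrac{1}{2}\bar a_{ij}$.

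With these capacity relations in hand, each of the four claimed identities follows by substituting them into the corresponding polynomial--posiform relation. For item~2), plugging $\widehat{\underaccent{\dot}{a}}_i^{1^{(1)}}=\widehat{\underaccent{\dot}{a}}_i$, $\widehat{\underaccent{\ddot}{a}}_i^{1^{(1)}}=\widehat{\underaccent{\ddot}{a}}_i$ and $\widehat a_{ki}^{1^{(1)}}=\widehat a_{ki}$ into the formula for $\widehat l_i^{1^{(1)}}$ reproduces the formula for $\widehat l_i$ term by term, while $\widehat l_{ij}^{1^{(1)}}=-\widehat a_{ij}^{1^{(1)}}=-\widehat a_{ij}=\widehat l_{ij}$; item~3) is the same computation with the roles of $G_1$ and $G_2$ exchanged. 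For item~4), the halving propagates linearly through the node relation, $\bar l_i^{1^{(1)}}=\bar{\underaccent{\dot}{a}}_i^{1^{(1)}}-\bar{\underaccent{\ddot}{a}}_i^{1^{(1)}}+\sum_{k\in\mathcal{V}_1\cap\mathcal{V}_2}\bar a_{ki}^{1^{(1)}}=\tfrac{1}{2}\bigl(\bar{\underaccent{\dot}{a}}_i-\bar{\underaccent{\ddot}{a}}_i+\sum_{k\in\mathcal{V}_1\cap\mathcal{V}_2}\bar a_{ki}\bigr)=\tfrac{1}{2}\bar l_i$, and the identical computation for $G_2^{(1)}$ gives $\bar l_i^{2^{(1)}}=\tfrac{1}{2}\bar l_i$, so that $\bar l_i^{1^{(1)}}=\bar l_i^{2^{(1)}}=\tfrac{1}{2}\bar l_i$; the pairwise case is immediate from $\bar l_{ij}=-\bar a_{ij}$ and $\bar a_{ij}^{1^{(1)}}=\tfrac{1}{2}\bar a_{ij}$. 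Item~1) then follows by adding the two constant-term identities: the non-overlap contributions are carried intact, while each overlap self-capacity $\bar{\underaccent{\ddot}{a}}_i$ contributes $\tfrac{1}{2}+\tfrac{1}{2}$ across the two subgraphs, so that $l^{1^{(1)}}+l^{2^{(1)}}$ collapses back to $l$.

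I expect the only real obstacle to lie in the first step---pinning down the precise capacity-level splitting rule and justifying that the even ($\tfrac{1}{2}$) division is what the algorithm actually performs at initialization. This rests on the facts that $\bm\lambda=\bm 0$ at $k=1$, so no asymmetry between the subgraphs has yet been introduced, and on a short case analysis of the overlap edges that is underwritten by the separable condition. Once those capacity relations are fixed, the passage to items~1)--4) is purely mechanical substitution into the linear identities, with no remaining subtlety.
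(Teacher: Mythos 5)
Your proposal is correct and follows essentially the same route as the paper's own proof: state the capacity-level splitting rule (overlap capacities halved, all others inherited unchanged), translate it into the posiform coefficients via Theorem~\ref{thm:homogeneousRep}, and then substitute into the posiform--polynomial relations (\ref{eq:relationFPhi^})--(\ref{eq:relationF2Phi2$}) to obtain items 1)--4). The only difference is that you spell out the substitutions and explicitly justify the case analysis via the separable condition and the initialization $\bm\lambda=\bm 0$, steps the paper leaves as ``easy to verify.''
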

\begin{proof}
    See Appendix \ref{app:split}.
\end{proof}

\begin{prop}
\label{prop:maxflow}
For all augmenting path maxflow algorithms, pushing a flow $\mathrm{F}$ from $s$ to $t$ in graph G results in a residual graph $G'$, where the multi-linear polynomial of the residual graph $G'$ differs from that of the original graph $G$ by a constant term of $\mathrm{F}$.
\end{prop}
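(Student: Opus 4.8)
The plan is to argue entirely through the cut-cost interpretation of the multi-linear polynomial, which avoids any edge-by-edge bookkeeping of the augmenting path. By construction, the multi-linear polynomial $f^G(\mathbf{x}_{\mathcal{V}})$ in \eqref{eq:polynomial} evaluates, for every realization of $\mathbf{x}_{\mathcal{V}}$, to the cost $C_{S,T}$ of the $s$-$t$ cut it encodes through \eqref{eq:relationSTCut} and \eqref{eq:costSTCut}. Hence it suffices to show that pushing a flow $\mathrm{F}$ along an augmenting path leaves this pseudo-boolean function unchanged except for a global additive constant $-\mathrm{F}$; since the multi-linear representation is unique, such a shift can only affect the constant term $l_0$, leaving every $l_i$ and $l_{ij}$ intact.

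The crux is the following combinatorial lemma: pushing a flow $\mathrm{F}$ from $s$ to $t$ along an augmenting path $P$ decreases the residual cut cost of \emph{every} $s$-$t$ cut $(S,T)$ by exactly $\mathrm{F}$. I would prove it by observing that the push alters residual capacities only on the edges of $P$ (each reduced by $\mathrm{F}$) and on their reverse arcs (each increased by $\mathrm{F}$). Fixing a cut $(S,T)$ with $s\in S$, $t\in T$, I classify each edge $(u,v)$ of $P$ by the sides of its endpoints: an edge crossing $S\to T$ contributes $-\mathrm{F}$ to the residual cut cost, since its forward capacity drops while its matching reverse arc points $T\to S$ and is not counted; an edge crossing $T\to S$ contributes $+\mathrm{F}$, since its reverse arc now points $S\to T$; and an edge with both endpoints on the same side contributes $0$. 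Summing over $P$, the net change equals $-\mathrm{F}$ times the signed crossing count, i.e. the number of $S\to T$ crossings minus the number of $T\to S$ crossings. Writing $\sigma_v=1$ if $v\in T$ and $\sigma_v=0$ if $v\in S$, this signed count telescopes along $P$ to $\sigma_t-\sigma_s=1$, so the cut cost decreases by precisely $\mathrm{F}$, independently of $(S,T)$.

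With the lemma in hand the proposition follows: since every cut cost drops by $\mathrm{F}$, we get $f^{G'}(\mathbf{x}_{\mathcal{V}})=f^G(\mathbf{x}_{\mathcal{V}})-\mathrm{F}$ for all $\mathbf{x}_{\mathcal{V}}$, which by uniqueness is realized as $l_0'=l_0-\mathrm{F}$ with all remaining coefficients preserved. I would add two short remarks: first, because $\mathrm{F}$ does not exceed the bottleneck capacity of $P$, all residual capacities stay non-negative, so $G'$ is a legitimate graph-cuts instance and \eqref{eq:polynomial} still applies to it; second, if $\mathrm{F}$ denotes a flow accumulated over several augmentations rather than a single path, the general statement follows by induction, since each push contributes its own additive constant.

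I expect the main obstacle to be the careful case analysis in the crossing lemma — in particular correctly pairing each forward residual edge of $P$ with its reverse arc and confirming that the signed crossing count equals $1$ for an arbitrary, not necessarily monotone, augmenting path. Everything else is a direct consequence of the uniqueness of the multi-linear polynomial already used in \eqref{eq:polynomial}.
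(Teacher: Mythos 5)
Your proof is correct, but it takes a genuinely different route from the paper's. The paper argues inside the posiform calculus: it isolates the posiform terms corresponding to the arcs of the augmenting path, records how pushing $\mathrm{F}$ changes each coefficient (forward arcs lose $\mathrm{F}$, backward arcs gain $\mathrm{F}$), and reduces the difference $\phi^G_h(\mathbf{x}) - \phi^{G'}_h(\mathbf{x})$ to $\mathrm{F}$ times the algebraic identity $x_{i_1} + \bar x_{i_1}x_{i_2} - \bar x_{i_2}x_{i_1} + \cdots + \bar x_{i_k} \equiv 1$. You instead work at the level of function values: your signed-crossing lemma shows every $s$-$t$ cut cost drops by exactly $\mathrm{F}$, hence $f^{G'}(\mathbf{x}_{\mathcal{V}}) = f^{G}(\mathbf{x}_{\mathcal{V}}) - \mathrm{F}$ pointwise, and uniqueness of the multi-linear representation forces the entire change into the constant term $l_0$. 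The two arguments share the same telescoping core --- your indicator $\sigma_v$ is precisely the boolean variable $x_v$, and your signed crossing contribution of an edge $(u,v)$ is the pointwise value of $\bar x_u x_v - \bar x_v x_u = x_v - x_u$, so your lemma is in effect the evaluation form of the paper's identity. What your packaging buys: it avoids posiform bookkeeping entirely, it makes the invariance of all non-constant coefficients an immediate consequence of uniqueness rather than a term-by-term inspection, and your closing remarks (bottleneck feasibility guaranteeing non-negative residual capacities, and induction over several augmentations) make explicit two points the paper leaves implicit. What the paper's version buys: by staying inside the posiform representation it slots directly into the invariance-analysis framework shared by Propositions \ref{prop:split}, \ref{prop:update} and \ref{prop:merge}, where every operation of the parallel algorithm is tracked through coefficients of $\phi_h$ and $f$; your argument, while cleaner in isolation, would need the correspondence of Theorem \ref{thm:homogeneousRep} re-invoked to connect back to the residual graph's edge capacities.
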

\begin{proof}
    See Appendix \ref{app:maxflow}.
\end{proof}

\begin{prop}
\label{prop:update}
The coefficients of $f^{G_1^{(k+1)}}$, $f^{G_2^{(k+1)}}$ and those of $f^{G_1^{(k')}}$, $f^{G_2^{(k')}}$, which are updated by the dual variables $\bm\lambda^{(k)}$, satisfy the following relations:
\begin{myEnumerate}
\item $\bar l_i^{1^{(k+1)}}\!\! = \bar l_i^{1^{(k')}}\!\! + \triangle\lambda_i^{(k+1)} ,\forall i\in\mathcal{V}_1\!\cap\!\mathcal{V}_2  $.
\item $\bar l_i^{2^{(k+1)}}\!\! = \bar l_i^{2^{(k')}}\!\! - \triangle\lambda_i^{(k+1)} ,\forall i\in\mathcal{V}_1\!\cap\!\mathcal{V}_2  $.
\item all the other coefficients remain unchanged.
\end{myEnumerate}
where $\triangle\lambda_i^{(k+1)} = \lambda_i^{(k+1)} - \lambda_i^{(k)}, \forall i\in\mathcal{V}_1\!\cap\!\mathcal{V}_2 $.
\end{prop}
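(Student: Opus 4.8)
The plan is to reduce the claim to the uniqueness of the multi-linear polynomial representation~(\ref{eq:polynomial}). The graph-updating step in Fig.~\ref{fig:twoSubgraphs} is, by construction of the dual decomposition, nothing more than the replacement of the linear penalty $\lambda_i^{(k)}x_i$ by $\lambda_i^{(k+1)}x_i$ on each shared vertex $i\in\mathcal{V}_1\cap\mathcal{V}_2$: subgraph $G_1$ carries this penalty with sign $+$ (it solves $E_{V_1}(\mathbf{x}^1\vert\bm\lambda)$) and subgraph $G_2$ carries it with sign $-$ (it solves $E_{V_2}(\mathbf{x}^2\vert-\bm\lambda)$). I would therefore first rewrite the update as the addition of the \emph{increment} $\triangle\lambda_i^{(k+1)}x_i=(\lambda_i^{(k+1)}-\lambda_i^{(k)})x_i$ to the pseudo-boolean function underlying $G_1^{(k')}$, and the subtraction of the same increment from the one underlying $G_2^{(k')}$.

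Next I would invoke Proposition~\ref{prop:maxflow} to pin down what the residual graphs carry. The maxflow computation turns $G_i^{(k)}$ into the residual $G_i^{(k')}$, and by Proposition~\ref{prop:maxflow} the two have identical multi-linear polynomials apart from the constant term; in particular $f^{G_1^{(k')}}$ still holds the penalty $\lambda_i^{(k)}$ inside its coefficient $\bar l_i^{1^{(k')}}$ exactly as $f^{G_1^{(k)}}$ did. This is precisely why the quantity the update must add is the increment $\triangle\lambda_i^{(k+1)}$ relative to $\lambda_i^{(k)}$, not the full value $\lambda_i^{(k+1)}$. Adding the monomial $\triangle\lambda_i^{(k+1)}x_i$ to a pseudo-boolean function changes, by the uniqueness of~(\ref{eq:polynomial}), exactly one coefficient --- the linear coefficient of $x_i$ --- by exactly $\triangle\lambda_i^{(k+1)}$, leaving the constant, every other unary coefficient and every pairwise coefficient untouched. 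With the $+$ sign for $G_1$ and the $-$ sign for $G_2$ this delivers items 1) and 2), and the invariance of everything else gives item 3) simultaneously.

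The step that I expect to require the most care is verifying that the graph-level realization of the update really induces this clean polynomial change, with \emph{no} change to the constant term $l^{1^{(k+1)}}=l^{1^{(k')}}$. On the graph side the increment is applied by modifying the terminal (t-link) capacities of vertex $i$, and the relation $\bar l_i=\bar{\underaccent{\dot}{a}}_i-\bar{\underaccent{\ddot}{a}}_i+\sum_{k\in\mathcal{V}_1\cap\mathcal{V}_2}\bar a_{ki}$ of~(\ref{eq:relationF1Phi1barNode}) shows that raising the source capacity or lowering the sink capacity of $i$ both shift $\bar l_i$ in the required direction; the thing to check is that the realization chosen does not silently inject a constant through~(\ref{eq:relationF1Phi1^}). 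Because the dual variable is supported only on $\mathcal{V}_1\cap\mathcal{V}_2$, no t-link of a non-shared vertex and no n-link is touched, so the pairwise and non-overlap unary coefficients are invariant by inspection; the only genuine verification is thus the constant, which I would settle by observing that the canonical representation of the pure monomial $\triangle\lambda_i^{(k+1)}x_i$ has zero constant and that any admissible t-link realization represents the same function. Assembling these observations yields all three items and completes the proof.
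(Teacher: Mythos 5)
Your overall route is the paper's route: the update is a t-link modification on the overlap nodes, and the three items are read off from the posiform-to-polynomial coefficient relations (\ref{eq:relationF1Phi1barNode}) and (\ref{eq:relationF2Phi2barNode}). Your additional use of Proposition~\ref{prop:maxflow} to explain why the change is the \emph{increment} $\triangle\lambda_i^{(k+1)}$ relative to the residual graph $G^{(k')}$ (rather than the full $\lambda_i^{(k+1)}$) is sound and consistent with the paper. Items 1) and 2), and the invariance of all pairwise coefficients and of the unary coefficients outside $\mathcal{V}_1\cap\mathcal{V}_2$, follow exactly as you argue.

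The gap is in the step you yourself flag as the crux: the claim that ``any admissible t-link realization represents the same function'' is false, and with it your verification of the constant term collapses. Distinct t-link realizations producing the same shift of $\bar l_i$ differ precisely by constants: raising the source capacity of $i$ by $\delta$ adds $\delta x_i$ (constant untouched), whereas raising the sink capacity by $\delta$ adds $\delta\bar x_i=\delta-\delta x_i$, shifting the constant term by $\delta$, since by (\ref{eq:relationF2Phi2^}) the constant $l^{2}$ is the sum of the sink capacities. The realization the algorithm actually performs --- and the one written in the paper's own proof --- is source-side for $G_1$ but \emph{sink-side} for $G_2$, i.e.\ $\bar{\underaccent{\ddot}{a}}^{2^{(k+1)}}_i=\bar{\underaccent{\ddot}{a}}^{2^{(k')}}_i+\triangle\lambda_i^{(k+1)}$, because realizing $-\triangle\lambda_i^{(k+1)}x_i$ by decreasing the source capacity of $G_2$ can drive it negative. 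Under that realization $l^{2^{(k+1)}}=l^{2^{(k')}}+\sum_{i\in\mathcal{V}_1\cap\mathcal{V}_2}\triangle\lambda_i^{(k+1)}$, so the constant of $f^{G_2}$ is \emph{not} invariant; ``no constant is injected'' is exactly what fails, and it cannot be settled by inspection of the pure monomial. To be fair, the paper's proof is silent on this point (it states the sink-side update and declares the conclusions ``easy to verify''), so your write-up shares its blind spot; but since your argument explicitly rests on proving constant-invariance, it breaks at its self-identified critical step. What is robustly provable is items 1), 2), and item 3) for every coefficient except the constant term of $f^{G_2}$; securing the constant as well would require realizing the $G_2$ update on the source side together with a non-negativity argument, which neither you nor the paper supplies (the discrepancy is harmless for the argmin, but it does affect the exact flow-reuse accounting in (\ref{eq:relation}) of Theorem~\ref{thm:conclusion}).
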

\begin{proof}
    See Appendix \ref{app:update}.
\end{proof}

\begin{prop}
\label{prop:merge}
The coefficients of $f^{G'}(\mathbf{x}_{\mathcal{V}})$, which is the multi-linear polynomial of the merged graph $G'$, have the following relations with those of $f^{G_1^{(k+1)}}(\mathbf{x}_{\mathcal{V}_1})$ and $f^{G_2^{(k+1)}}(\mathbf{x}_{\mathcal{V}_2})$:
\begin{myEnumerate}
\item $l' =  l^{1^{(k+1)}}\!\! + l^{2^{(k+1)}}$ .
\item $\widehat l_i' =  \widehat l^{1^{(k+1)}}_i \!\! , \forall i\in\mathcal{V}_1\!\setminus\! \mathcal{V}_2;\quad$ $\widehat l_{ij}' = \widehat l_{ij}^{1^{(k+1)}}\!\!  ,\forall i\!\text{ \it or }\!j\in\mathcal{V}_1\!\setminus\!\mathcal{V}_2$ .
\item $\widetriangle l_i' = \widetriangle l^{2^{(k+1)}}_i \!\! , \forall i\in\mathcal{V}_2\!\setminus\!\mathcal{V}_1;\quad$  $ \widetriangle l_{ij}'=\widetriangle l_{ij}^{2^{(k+1)}} \!\!,\forall i\!\text{ \it or }\!j\in\mathcal{V}_2\!\setminus\!\mathcal{V}_1$ .
\item $\bar l_i' = \bar l^{1^{(k+1)}}_i \!\!+ \bar l^{2^{(k+1)}}_i  , \forall i\in\mathcal{V}_1\!\cap\!\mathcal{V}_2;\quad$  $\bar l_{ij}' = \bar l_{ij}^{1^{(k+1)}}\!\! + \bar l_{ij}^{2^{(k+1)}} \!\!, \forall i,j\in\mathcal{V}_1\!\cap\!\mathcal{V}_2$ .
\end{myEnumerate}
\end{prop}
\begin{proof}
    See Appendix \ref{app:merge}.
\end{proof}

\subsection{Correctness and efficiency of the merging}

The following theorem states the relations between the multi-linear polynomials of the original graph $G$ and that of the merged graph $G'$ shown in Fig. \ref{fig:twoSubgraphs}.
\begin{thm}
    \label{thm:conclusion}
    The multi-linear polynomials $f^{G'}(\mathbf{x}_\mathcal{V})$ of the merged graph $G'$ satisfies the following equation:
    \begin{equation}
	\label{eq:relation}
	f^{G}(\mathbf{x}_\mathcal{V}) = f^{G'}(\mathbf{x}_\mathcal{V}) + \sum_{m=1}^{K}\mathrm{F_1^{(m)}} + \sum_{m=1}^{K}\mathrm{F_2^{(m)}},
    \end{equation}
    where $f^G(\mathbf{x}_\mathcal{V})$ is the multi-linear polynomials of the original graph $G$, $\mathrm{F_1^{(m)}}$ and $\mathrm{F_2^{(m)}}$ are the flows computed in the $m^{th}$ iteration on subgraphs $G^{(m)}_1$ and $G^{(m)}_2$, respectively, and $K$ is the number of iterations of the two subgraphs before merging.
\end{thm}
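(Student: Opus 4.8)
**

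The plan is to prove \Cref{thm:conclusion} by tracking the multi-linear polynomial through the entire pipeline of \Cref{fig:twoSubgraphs}, invoking the four propositions \ref{prop:split}--\ref{prop:merge} in sequence and then telescoping the accumulated constant terms. The key bookkeeping observation is that every operation in the naive converged parallel BK-algorithm affects the multi-linear polynomial in only one of two ways: it either splits the coefficients across the two subgraphs (splitting, merging) or it shifts a constant term (maxflow) or redistributes the overlap-node unary coefficients by $\pm\triangle\lambda_i$ (updating). Since the update step (\Cref{prop:update}) moves $+\triangle\lambda_i^{(k+1)}$ into $\bar l_i^{1^{(k+1)}}$ and $-\triangle\lambda_i^{(k+1)}$ into $\bar l_i^{2^{(k+1)}}$, the \emph{sum} $\bar l_i^{1^{(k)}}+\bar l_i^{2^{(k)}}$ is invariant under the dual update; this is the crucial cancellation that makes the telescoping work.

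First I would establish the base case. Apply \Cref{prop:split} to express the coefficients of $f^{G_1^{(1)}}$ and $f^{G_2^{(1)}}$ in terms of those of $f^G$: the private (hat/triangle) unary and pairwise coefficients are inherited unchanged, the shared-overlap coefficients $\bar l_i,\bar l_{ij}$ are halved and duplicated, and the constants add back to $l$. Consequently the combination
\begin{equation}
\label{eq:baseCombine}
f^{G_1^{(1)}}(\mathbf{x}_{\mathcal{V}_1}) + f^{G_2^{(1)}}(\mathbf{x}_{\mathcal{V}_2}) = f^{G}(\mathbf{x}_{\mathcal{V}})
\end{equation}
holds \emph{identically} as a function on $\mathbf{x}_{\mathcal{V}}$, because every private coefficient appears once and each shared coefficient appears as two halves summing to the original. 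This identity reduces the theorem to showing that the combined polynomial of the two subgraphs, after $K$ rounds of maxflow-and-update, equals $f^{G'}$ plus the total accumulated flow.

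Next I would run an induction on the iteration index $m$, maintaining the invariant that
\begin{equation}
\label{eq:invariant}
f^{G_1^{(m)}}(\mathbf{x}_{\mathcal{V}_1}) + f^{G_2^{(m)}}(\mathbf{x}_{\mathcal{V}_2}) = f^{G}(\mathbf{x}_{\mathcal{V}}) - \sum_{n=1}^{m-1}\mathrm{F_1^{(n)}} - \sum_{n=1}^{m-1}\mathrm{F_2^{(n)}}
\end{equation}
as functions on $\mathbf{x}_{\mathcal{V}}$. For the inductive step I would first apply \Cref{prop:maxflow} to each subgraph: pushing flows $\mathrm{F_1^{(m)}}$ and $\mathrm{F_2^{(m)}}$ lowers the two polynomials' constant terms by exactly those amounts, which is what advances the right-hand side of \eqref{eq:invariant} by one term. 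Then I would apply \Cref{prop:update}: the $+\triangle\lambda_i^{(m+1)}$ added to $\bar l_i^{1}$ is exactly cancelled by the $-\triangle\lambda_i^{(m+1)}$ added to $\bar l_i^{2}$ in the sum, so the dual update leaves the combined polynomial $f^{G_1}+f^{G_2}$ unchanged; all other coefficients are fixed by clause~3) of \Cref{prop:update}. This preserves \eqref{eq:invariant}. After $K$ iterations the invariant reads
\begin{equation}
\label{eq:afterK}
f^{G_1^{(K+1)}}(\mathbf{x}_{\mathcal{V}_1}) + f^{G_2^{(K+1)}}(\mathbf{x}_{\mathcal{V}_2}) = f^{G}(\mathbf{x}_{\mathcal{V}}) - \sum_{m=1}^{K}\mathrm{F_1^{(m)}} - \sum_{m=1}^{K}\mathrm{F_2^{(m)}}.
\end{equation}
Finally I would invoke \Cref{prop:merge}, whose four clauses state precisely that each coefficient of $f^{G'}$ is the sum of the corresponding coefficients of $f^{G_1^{(K+1)}}$ and $f^{G_2^{(K+1)}}$ (private coefficients inherited, shared coefficients added), i.e. $f^{G'}(\mathbf{x}_{\mathcal{V}}) = f^{G_1^{(K+1)}}(\mathbf{x}_{\mathcal{V}_1}) + f^{G_2^{(K+1)}}(\mathbf{x}_{\mathcal{V}_2})$ as functions on $\mathbf{x}_{\mathcal{V}}$. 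Substituting into \eqref{eq:afterK} and rearranging yields \eqref{eq:relation}.

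I expect the main obstacle to be bookkeeping rather than any deep difficulty: one must verify that combining coefficients indexed over the three set-partition blocks ($\mathcal{V}_1\setminus\mathcal{V}_2$, $\mathcal{V}_1\cap\mathcal{V}_2$, $\mathcal{V}_2\setminus\mathcal{V}_1$) really does reconstitute a polynomial on the full $\mathbf{x}_{\mathcal{V}}$ without double-counting or omitting any term, and in particular that ``adding'' $f^{G_1}$ and $f^{G_2}$ is well-defined even though they are functions of the overlapping argument sets $\mathbf{x}_{\mathcal{V}_1}$ and $\mathbf{x}_{\mathcal{V}_2}$. The cleanest way to handle this is to regard every subgraph polynomial as a polynomial on the common variable set $\mathbf{x}_{\mathcal{V}}$ (with missing variables carrying zero coefficients), so that $f^{G_1}+f^{G_2}$ is ordinary addition of multi-linear polynomials; the three propositions \ref{prop:split}, \ref{prop:update}, \ref{prop:merge} were stated coefficient-by-coefficient precisely to make this addition transparent. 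The only genuinely substantive point is the $\pm\triangle\lambda_i$ cancellation in the update step, and that is already packaged in \Cref{prop:update}.
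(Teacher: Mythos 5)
Your proposal is correct and takes essentially the same route as the paper: both proofs push the four propositions through the pipeline in sequence, rely on the antisymmetric $\pm\triangle\lambda_i^{(k+1)}$ structure of the update step to cancel the dual contributions, and let the pushed flows accumulate as constant offsets that survive to the final equation. The only difference is bookkeeping style---you maintain $f^{G_1^{(m)}}(\mathbf{x}_{\mathcal{V}_1})+f^{G_2^{(m)}}(\mathbf{x}_{\mathcal{V}_2})$ as an inductive invariant with cancellation at every iteration, whereas the paper writes closed-form coefficient expressions after $K$ iterations (carrying the accumulated $\sum_{m=1}^{K}\triangle\lambda_i^{(m+1)}$ terms) and cancels them once at the merge step.
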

\begin{proof}
	The original graph $G$ was first split into two subgraphs. Then, after $K$ iterations of maxflow computation and updating, the two subgraphs $G_1^{(K+1)}$ and $G_2^{(K+1)}$ were obtained. The following relations among the coefficients of $f^{G_1^{(K+1)}}(\mathbf{x}_\mathcal{V})$ , $f^{G_2^{(K+1)}}(\mathbf{x}_\mathcal{V})$ and $f^G(\mathbf{x}_\mathcal{V})$ can be drawn from Proposition.\ref{prop:split} -- \ref{prop:update}:
	\begin{equation}
	\widehat l^{1^{(K+1)}}_i \!\!= \widehat l_i , \forall i\in\mathcal{V}_1\!\setminus\! \mathcal{V}_2;\,\,\,\,
	\widehat l_{ij}^{1^{(K+1)}}\!\! = \widehat l_{ij},\forall i\!\text{ \it or }\!j\in\mathcal{V}_1\!\setminus\!\mathcal{V}_2,
	\end{equation}
	\begin{equation}
	\widetriangle l^{2^{(K+1)}}_i \!\!= \widetriangle l_i , \forall i\in\mathcal{V}_2\!\setminus\!\mathcal{V}_1;\,\,\,\,
	\widetriangle l_{ij}^{2^{(K+1)}} \!\!= \widetriangle l_{ij},\forall i\!\text{ \it or }\!j\in\mathcal{V}_2\!\setminus\!\mathcal{V}_1,
	\end{equation}
	\begin{equation}
	\bar l_{ij}^{1^{(K+1)}}\!\! = \bar l_{ij}^{2^{(K+1)}}\!\! =\frac{1}{2} \bar l_{ij},\forall i,j\in\mathcal{V}_1\!\cap\!\mathcal{V}_2,
	\end{equation}
	\begin{equation}
	\bar l^{1^{(K+1)}}_i\!\!  =\frac{1}{2} \bar l_i  + \sum_{m=1}^{K}\triangle\lambda_i^{(m+1)} , \forall i\in\mathcal{V}_1\!\cap\!\mathcal{V}_2,
	\end{equation}
	\begin{equation}
	\bar l^{2^{(K+1)}}_i\!\!  =\frac{1}{2} \bar l_i  - \sum_{m=1}^{K}\triangle\lambda_i^{(m+1)} , \forall i\in\mathcal{V}_1\!\cap\!\mathcal{V}_2,
	\end{equation}
	\begin{equation}
	l^{1^{(K+1)}}\!\!  = l^{1^{(1)}}  - \sum_{m=1}^{K}\mathrm{F}_1^{(m)},
	\end{equation}
	\begin{equation}
	l^{2^{(K+1)}}\!\!  = l^{2^{(1)}}  - \sum_{m=1}^{K}\mathrm{F}_2^{(m)}.
	\end{equation}
	When the two subgraphs $G_1^{(K+1)}$ and $G_2^{(K+1)}$ are merged into $G'$, according to Proposition \ref{prop:merge}, the coefficients of $f^{G'}(\mathbf{x}_\mathcal{V})$ are:
	\begin{equation}
	\label{eq:constTerm}
	\begin{split}
	l' &=  l^{1^{(K+1)}} + l^{2^{(K+1)}}\\
	& = l^{1^{(1)}} + l^{2^{(1)}} - \sum_{m=1}^{K}\mathrm{F}_1^{(m)} - \sum_{m=1}^{K}\mathrm{F}_2^{(m)} \\
	& = l - \sum_{m=1}^{K}\mathrm{F}_1^{(m)} - \sum_{m=1}^{K}\mathrm{F}_2^{(m)} ,
	\end{split}
	\end{equation}
	\begin{equation}
	\begin{split}
	&{}\widehat l'_i=\widehat l^{1^{(K+1)}}_i \!\!= \widehat l_i , \forall i\in\mathcal{V}_1\!\setminus\! \mathcal{V}_2;\\
	&{}\widehat l'_{ij}= \widehat l_{ij}^{1^{(K+1)}}\!\! = \widehat l_{ij},\forall i\!\text{ \it or }\!j\in\mathcal{V}_1\!\setminus\!\mathcal{V}_2	.
        \end{split}
	\end{equation}
	\begin{equation}
        \begin{split}
	&{}\widetriangle l'_i=  \widetriangle l^{2^{(K+1)}}_i \!\!= \widetriangle l_i , \forall i\in\mathcal{V}_2\!\setminus\!\mathcal{V}_1;\\
	&{}\widetriangle l'_{ij}=\widetriangle l_{ij}^{2^{(K+1)}} \!\!= \widetriangle l_{ij},\forall i\!\text{ \it or }\!j\in\mathcal{V}_2\!\setminus\!\mathcal{V}_1 .
        \end{split}
	\end{equation}
	\begin{align*}
	\bar l'_i &=  \bar l^{1^{(K+1)}}_i\!\! + \bar l^{2^{(K+1)}}_i\\
	&  =\left(\frac{1}{2} \bar l_i  + \sum_{m=1}^{K}\triangle\lambda_i^{(m+1)}\right) +  \left(\frac{1}{2} \bar l_i  - \sum_{m=1}^{K}\triangle\lambda_i^{(m+1)}\right) \\
	& = \bar l_i,\quad \forall i\in\mathcal{V}_1\!\cap\!\mathcal{V}_2 \,\,\numberthis ,
	\end{align*}
	\begin{equation}
	\label{eq:pairwiseIntersection}
	\bar l'_{ij}=\bar l_{ij}^{1^{(K+1)}}\!\!+ \bar l_{ij}^{2^{(K+1)}}\!\! =\frac{1}{2} \bar l_{ij}+\frac{1}{2} \bar l_{ij} =  \bar l_{ij},\forall i,j\!\in\!\mathcal{V}_1\cap\mathcal{V}_2 . 
	\end{equation}
	Therefore, only the constant term of $f^{G'}(\mathbf{x}_\mathcal{V})$ differs from that of  $f^{G}(\mathbf{x}_\mathcal{V})$ by $\sum_{m=1}^{K}\mathrm{F}_1^{(m)} + \sum_{m=1}^{K}\mathrm{F}_2^{(m)}$. All the other terms have the same coefficients, and Theorem \ref{thm:conclusion} is proved.
\end{proof}

Note that $\sum_{m=1}^{K}\mathrm{F_1^{(m)}}\allowbreak + \sum_{m=1}^{K}\mathrm{F_2^{(m)}}$ is the accumulated flows in the subgraphs, which is independent of the variables $\mathbf{x}_\mathcal{V}$. Since only the optimal arguments $\accentset{*}{\mathbf{x}}_{\mathcal{V}}$ at which $f^G(\mathbf{x}_{\mathcal{V}})$ reaches its global minimum, not the global minimum itself, matters for a graph cuts problem defined on graph $G$, if the pseudo-boolean functions of two graph cuts problems only differ by flows which do not depend on their arguments $\mathbf{x}_\mathcal{V}$, the two graph cuts problems are equivalent. And by the convention of the combinatorial optimization community, this difference is often referred to as a constant. An illustrative example is given in Appendix \ref{app:illustrative}. Therefore, Theorem \ref{thm:conclusion} reveals the following two important things: 1) The two graph cuts problems, which are defined on the original graph $G$ and the merged graph $G'$, are equivalent because their corresponding pseudo-boolean functions only differ by a constant. Hence the same global optimal solution can be obtained from the merged graph $G'$. 2) Moreover, this constant is the summation of all the flows computed in the two subgraphs in all the $K$ iterations, indicating that the flows are reused when computing the maxflow for the merged graph $G'$. As a result, computing the maxflow for the merged graph $G'$ will be much faster than computing the maxflow for the original graph $G$ from scratch.

Theorem \ref{thm:conclusion} also gives a good explanation of the parallel BK-algorithm from the perspective of flows. The algorithm can be regarded as decomposing the maximum flow computation of the original graph into subgraphs. Once the summation of all the accumulated flows in the subgraphs reaches the maximum flow of the original graph, the parallel BK-algorithm converges. Therefore, whenever the subgraphs can easily accumulate flows, the parallel BK-algorithm can easily converge.

To evaluate the effectiveness of reusing flows in the proposed merging method, we once again apply the two fore-/back-ground image segmentation methods (seg1 and seg2 used in section \ref{sec:convergence}) on all 500 images of the Berkeley segmentation dataset~\cite{amfm_pami2011}. Here, the naive converged parallel BK-algorithm (Algorithm \ref{alg:naiveConvergedParallel}), whose $N$ is fixed at $2$ and whose $\mathrm{ITER}$ is set to $1,5,15$ and $20$, respectively, is used. Fig.~\ref{fig:relativeTimeFlowMerge} shows the distribution of the relative times, which are defined as the ratio of the times for computing the maxflow of the merged graph $G'$ to that of the original graph $G$. The figure also shows the relative used flows, which are defined as the ratio of all the accumulated flows in the subgraphs before merging, that is $\sum_{m=1}^{K}\mathrm{F}_1^{(m)}+\sum_{m=1}^{K}\mathrm{F}_2^{(m)}$ in (\ref{eq:relation}), to the maxflows of the original graph $G$.
\begin{figure*}[htbp]
    \begin{center}
	\subfloat[Relative times for seg1]{\label{fig:relativeTimeMergeSeg1}\includegraphics[width=0.45\linewidth]{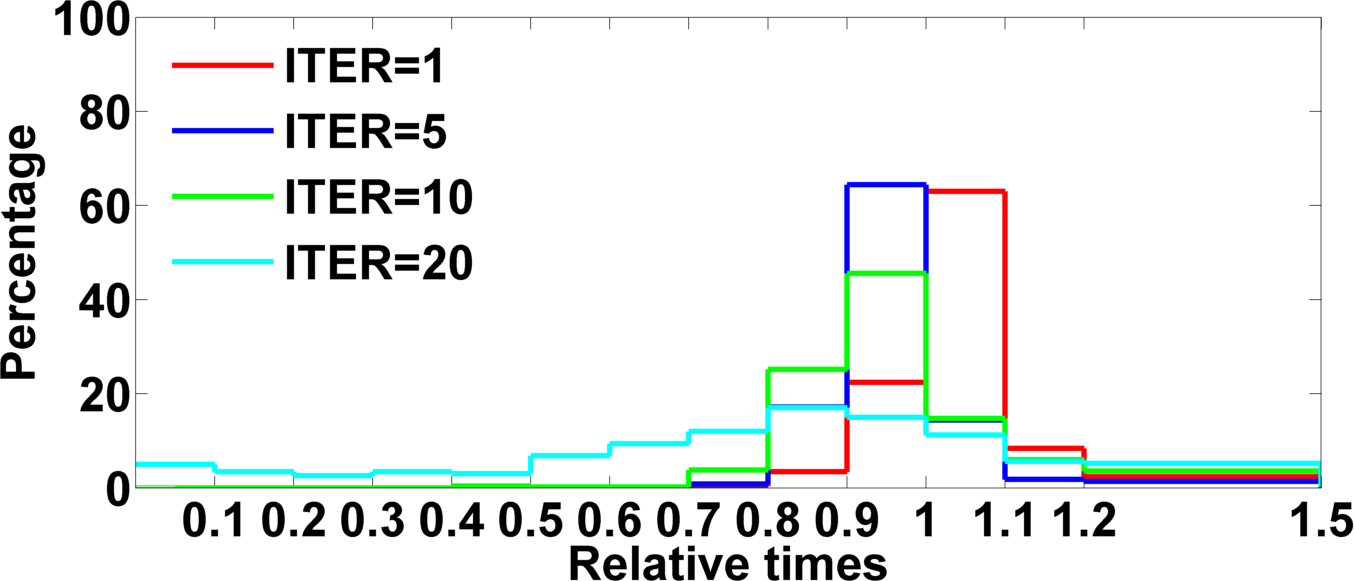}}\qquad\qquad
	\subfloat[Relative reused flows for seg1]{\label{fig:relativeFlowMergeSeg1}\includegraphics[width=0.45\linewidth]{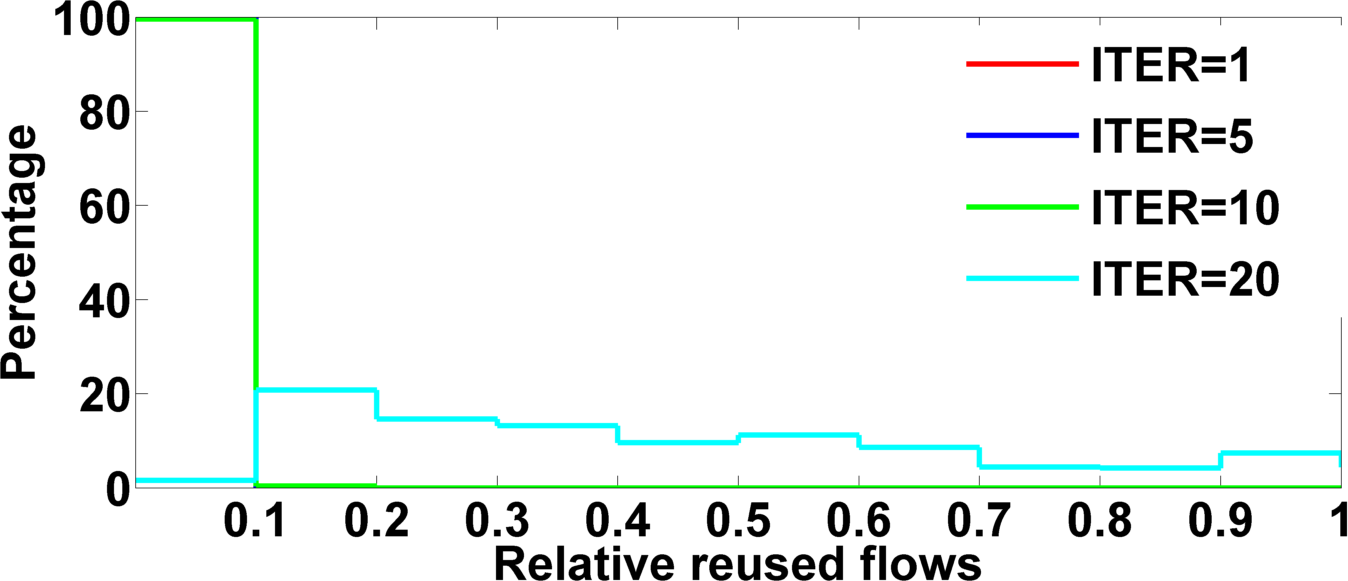}}\\
	\subfloat[Relative times for seg2]{\label{fig:relativeTimeMergeSeg2}\includegraphics[width=0.45\linewidth]{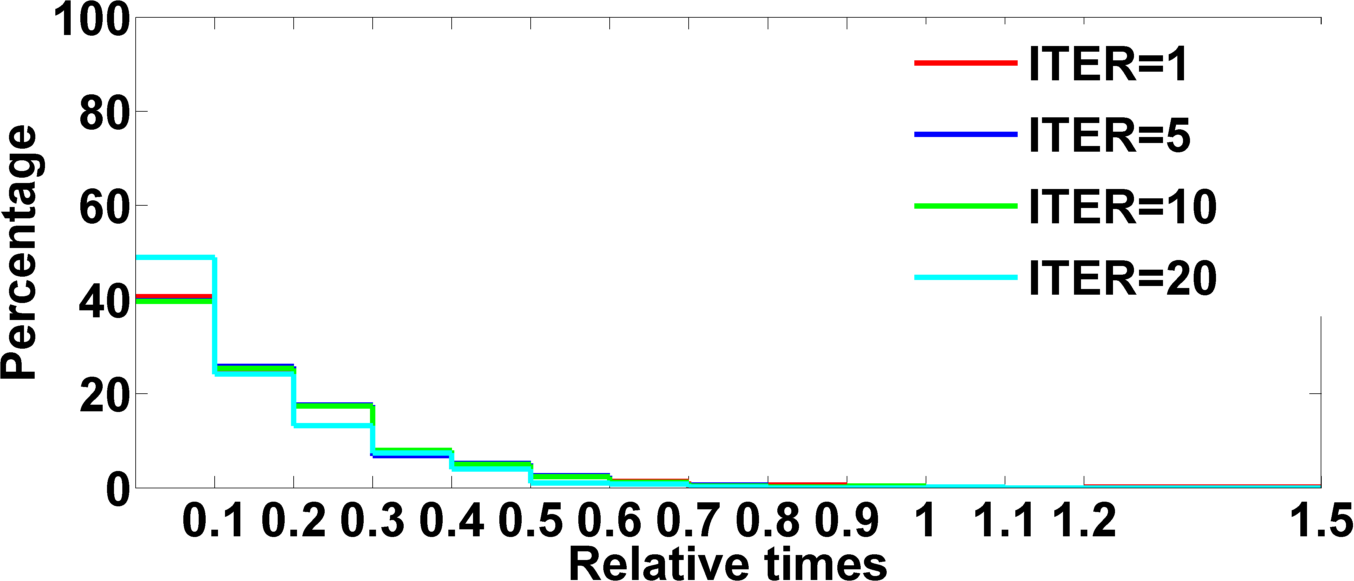}}\qquad\qquad
	\subfloat[Relative reused flows for seg2]{\label{fig:relativeFlowMergeSeg2}\includegraphics[width=0.45\linewidth]{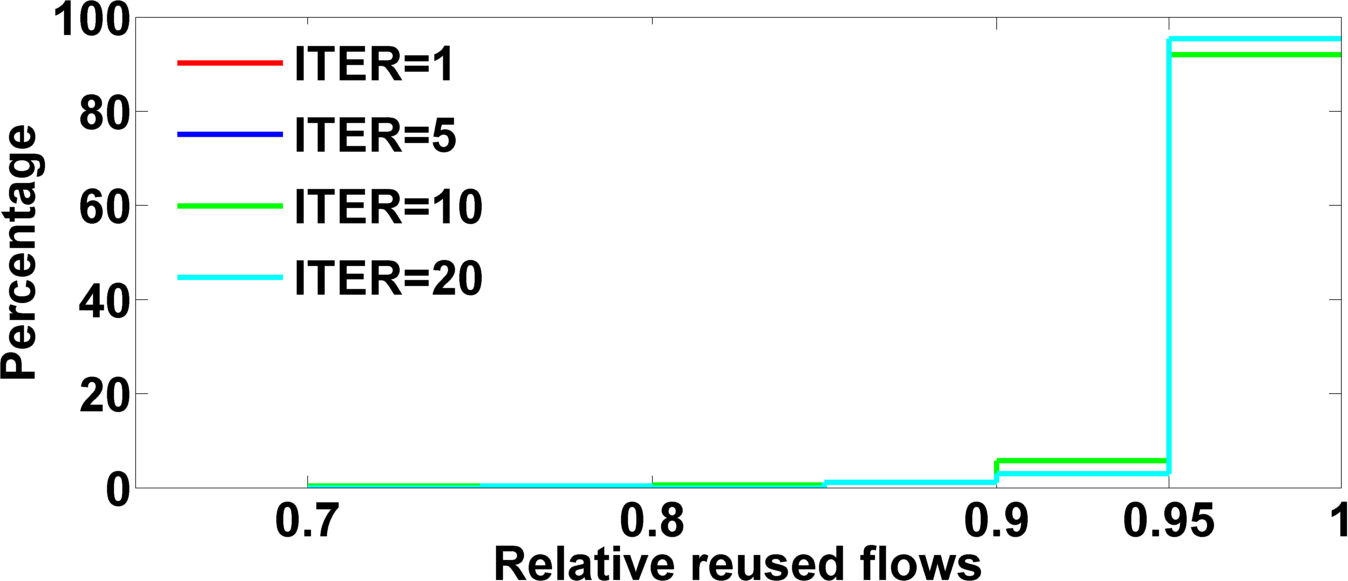}}
    \end{center}
    \caption{Relative times and relative reused flows for the two segmentations. In all the figures, $\mathrm{ITER}$ is the iteration control parameter used in Algorithm \ref{alg:naiveConvergedParallel}.}
    \label{fig:relativeTimeFlowMerge}
\end{figure*}

The relative times for seg1 when $\mathrm{ITER}=1,5,10$, as shown in Fig. \subref*{fig:relativeTimeMergeSeg1}, are roughly equal to 1, which means there is no speedup in computing the merged graph $G'$. By examining Fig. \subref*{fig:relativeFlowMergeSeg1}, we found that the reused flows in these three cases are very small (their histograms are all concentrated in the first bin). It is precisely this no-flow to reuse that accounts for the ineffectiveness in reusing flows. Because for the original graph $G$ of seg1 only the pixels in the leftmost column are connected to the source and the pixels in the rightmost column are connected to the sink,  the maxflows of the two subgraphs in the first iteration are all equal to $0$, i.e., $\mathrm{F_1^{(1)}}=\mathrm{F_2^{(1)}}=0$, because none of the pixels of $G_1^{(1)}$ are connected to the sink and none of the pixels of $G_2^{(1)}$ are connected to the source. Once the subgraphs accumulated a certain amount of flows, a sustained acceleration in the computations of the merged graph occurred, as shown in Fig. \ref{fig:relativeTimeFlowMerge}\subref{fig:relativeTimeMergeSeg1},\subref{fig:relativeFlowMergeSeg1} for the $\mathrm{ITER}=20$ case. 

The flows in the subgraphs for seg2 are fairly easy to accumulate, as can be seen in Fig. \subref*{fig:relativeFlowMergeSeg2}, where even in the $\mathrm{ITER}=1$ case, over $90\%$ of the images have accumulated over $95\%$ of the flows of the original graph. This leads to a significant speedup in the merged graph, where the maxflow computations for approximately $40\%$ of the merged graphs are over 10 times faster than those of the original graphs, and over $90\%$ of the images could achieve a speedup factor of at least 3. Although further increase of the flows in the subgraphs is relatively slow, once this increase becomes apparent, it will be reflected in the improvement of the relative times as well, as shown in Fig. \ref{fig:relativeTimeFlowMerge}\subref{fig:relativeTimeMergeSeg2},\subref{fig:relativeFlowMergeSeg2} for the $\mathrm{ITER}=20$ case. 

The experiments here show that whenever there are accumulated flows in the subgraphs, these flows will be reused effectively in the merged graph. It is worth noting that the correctness and efficitiveness of the merging method can be easily generalized to cases of more than two subgraphs.

\section{Dynamic parallel graph cuts algorithm}
With the two key building blocks -- splitting and merging at hand, our dynamic parallel graph cuts algorithm is described in Algorithm \ref{alg:dynamicParallel}.

\begin{algorithm}[htbp]                      
\caption{Dynamic Parallel Graph Cuts Algorithm}          
\label{alg:dynamicParallel}                           

\begin{algorithmic}[1]                    
	\STATE \textbf{{\bf Set} $nDiff:=+\infty$;}
	\STATE {Initial splitting;} 
	\WHILE{$nDiff > 0$}
	\STATE \emph{A subgraph can be further split into a number of subgraphs}; \label{code:splitting}
	\REPEAT
	\STATE \emph{Compute maxflow for all the subgraphs;} \label{code:maxflow}
	\STATE Update all the subgraphs accordingly; \label{code:update}
	\STATE Count the number of nodes that disagree on their optimal values, denoted as $nDiff$;\label{code:nDiff}
	\UNTIL{$nDiff == 0$ \OR \emph{some other conditions are met}} \label{code:stoppingRule}
	\STATE \emph{Any number of neighboring subgraphs can be merged;} \label{code:merging}
	\ENDWHILE
\end{algorithmic}
\end{algorithm}

In essence, Algorithm \ref{alg:dynamicParallel} is a general parallel framework, where lines \ref{code:splitting}, \ref{code:maxflow}, \ref{code:stoppingRule} and \ref{code:merging} do not specify their concrete realization but give the user complete freedom to choose their own implementations for specific problem and specific parallel and distributed platforms.  
Line \ref{code:maxflow} specifies the used maxflow solver. Lines \ref{code:splitting}, \ref{code:stoppingRule} and \ref{code:merging} jointly determine when and how the subgraphs are merged or further split: line \ref{code:stoppingRule} specifies the time for merging and its realization is given in line \ref{code:merging}, and line \ref{code:splitting} determines the dynamic splitting strategy.

It can be seen from lines \ref{code:maxflow} -- \ref{code:nDiff} of Algorithm \ref{alg:dynamicParallel} that a series of similar maxflow problems (more specifically, graph cuts problems in adjacent iterations only differing in a tiny fraction of t-link values) need to be solved in order to obtain the global optimal solutions of the original graph. Therefore, any graph cuts algorithm that is efficient in this dynamic setting can be used as the maxflow solver in line \ref{code:maxflow}. In other words, any such serial graph cuts algorithm can be parallelized using Algorithm \ref{alg:dynamicParallel} to further boost its efficiency and scalability.

Since the BK-algorithm~\cite{boykov2004experimental} has some efficient dynamic variants~\cite{kohli2007dynamic,kohli2005efficiently,juan2006active}, it is parallelized (used as the maxflow solver) by the parallel BK-algorithm~\cite{strandmark2010parallel,strandmark2011parallel}. Pseudoflow-based maxflow algorithms~\cite{hochbaum2008pseudoflow}, which can often be efficiently warm started after solving a similar problem, can also be parallelized using Algorithm \ref{alg:dynamicParallel}. For example, the most recently proposed Excesses Incremental Breadth-First Search (Excesses IBFS) algorithm~\cite{goldberg2015faster}, which is a generalization of the incremental breadth-first search (IBFS) algorithm~\cite{goldberg2011maximum} to maintain a pseudoflow, could act as a better candidate maxflow solver in line \ref{code:maxflow} of Algorithm \ref{alg:dynamicParallel}. The reason is that the performance of the Excesses IBFS is competitive (usually faster) compared with other dynamic graph cuts algorithms. More importantly, it has a polynomial running time guarantee whereas there is no known polynomial time limit for the BK-algorithm.

Dynamic merging aims to dynamically remedy the over-splitting during the running time, and it can make the dynamic parallel graph cuts algorithm converge to the globally optimal solutions within a predefined number of iterations. Here is a general procedure: 1) the graph is initially split into $N$ subgraphs, and no further splitting is permitted; 2) invoke the merging operations every $K$ iterations; 3) for a merging operation, every neighboring $\ell$ subgraphs are merged into a single graph. In the worst case, the merging operations can be invoked $\log_\ell^N$ times, leaving only one graph, from which the global optimal solutions can be obtained. Therefore, the dynamic parallel graph cuts algorithm can converge within no more than $K\log_\ell^N$ iterations under these settings. By setting $K=1$ and $\ell=N$, only one iteration is needed in this case.

Dynamic splitting is also important in some cases. Since the connection strengths may vary dramatically even in a uniformly connected graph, it is difficult to design an optimal scheme to divide the graph into subgraphs at the beginning. In addition, since the commonly used equal division strategy in most cases can only balance the memory storage, not the computational load, and because the situation could be further worsened by the dynamic merging process, dynamic splitting is needed for dynamic workload balancing.

Incorporating dynamic merging and dynamic splitting enables the parallel granularity, load balance and configuration of the subgraph splitting to be dynamically adjusted during the parallel computation. Hence Algorithm \ref{alg:dynamicParallel} is here called dynamic. Our dynamic parallel graph cuts algorithm is a general framework that could be tailored to fit different types of graphs and platforms for better efficiency and convergence rate.

To summarize, our proposed dynamic parallel graph cuts algorithm has the following three main advantages: 1) any serial graph cuts solver that is efficient in the dynamic setting can be parallelized using the dynamic parallel graph cuts algorithm; 2) the dynamic parallel graph cuts algorithm guarantees convergence, moreover, it can converge to the global optimum within a predefined number of iterations; 3) the subgraph splitting configuration can be adjusted freely during the parallelization process.

\section{Experiments}
Although advanced splitting and merging strategies could be cleverly used to adjust the subgraph splitting configuration during the whole parallelization process, finding such a general strategy is not an easy task. For a fair comparison with the parallel BK-algorithm, only the BK-algorithm is used as the graph cuts solver in the iterations of all the parallel graph cuts algorithms. Therefore, in all the following experiments, only the naive converged parallel BK-algorithm (Algorithm \ref{alg:naiveConvergedParallel}) is used, in comparison with the parallel BK-algorithm, to evaluate the effectiveness of our introduced merging method. There are only two parameters for all these experiments: $\mathrm{ITER}$ controls the starting time of the merging operations in the naive converged parallel BK-algorithm, and $\mathrm{TRDS}$ is the number of the used computational threads for the two parallel algorithms.

\subsection{Image segmentation}
The performance of the naive converged parallel BK-algorithm is first evaluated on the two image segmentation problems, seg1 and seg2, on all 500 images of the Berkeley segmentation dataset~\cite{amfm_pami2011}, which are used in section \ref{sec:convergence} to evaluate the convergence problem in the parallel BK-algorithm. $\mathrm{TRDS}$ is set to $4$ and $8$, respectively. And for each settings of $\mathrm{TRDS}$, $\mathrm{ITER}$ is set to $15,20,25,30$, respectively. We further define three parameters, $t_\mathrm{BK}, t_\mathrm{PBK} \text{ and } t_\mathrm{CPBK}$, to represent the time elapsed for maxflow calculations using the BK-algorithm, the parallel BK-algorithm and the naive converged parallel BK-algorithm, respectively. This image dataset can be divided into two parts according to whether the images can converge or not in the parallel BK-algorithm -- referred to as successImages and failedImages, respectively.  

\begin{table}
    \caption{Number of images that have triggered merging operations for seg2.}
    \label{tab:numMerge}
    \begin{center}
	\begin{tabular}{l|*{4}{p{0.40cm}}|c}
	    \toprule
	    \diagbox{\footnotesize $\mathrm{TRDS}$}{\footnotesize $\mathrm{ITER}$} & 15 & 20 & 25 & 30 & successImages\\
	    \midrule
	    4 & 290 & 89 & 49 & 40 & 486 \\
	    8 & 197 & 91 & 66 & 54 & 362 \\
	    \bottomrule
	\end{tabular}
    \end{center}
\end{table}

\begin{figure*}[htbp]
	\begin{center}
		\subfloat[\scriptsize Relative times for seg1 on failedImages of 4 computational threads. The red, blue, green and cyan lines represent $t_\mathrm{CPBK}/t_\mathrm{BK}$ with $\mathrm{ITER}$ set to $15$, $20$, $25$ and $30$, respectively, and their medians are $1.58$, $1.23$, $1.34$ and $1.17$.]{\label{fig:improper4}\includegraphics[width=0.31\linewidth]{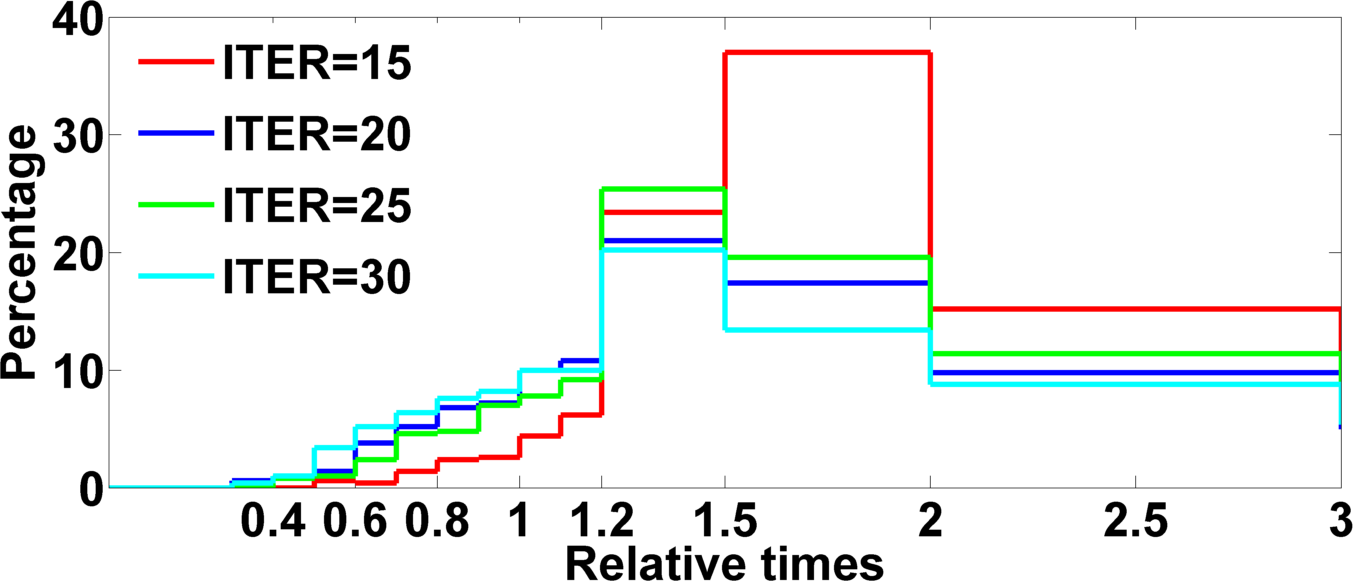}}\quad
		\subfloat[\scriptsize Relative times for seg1 on failedImages of 8 computational threads. The red, blue, green and cyan lines represent $t_\mathrm{CPBK}/t_\mathrm{BK}$ with $\mathrm{ITER}$ set to $15$, $20$, $25$ and $30$, respectively, and their medians are $1.89$, $1.74$, $1.91$ and $2.03$.]{\label{fig:improper8}\includegraphics[width=0.31\linewidth]{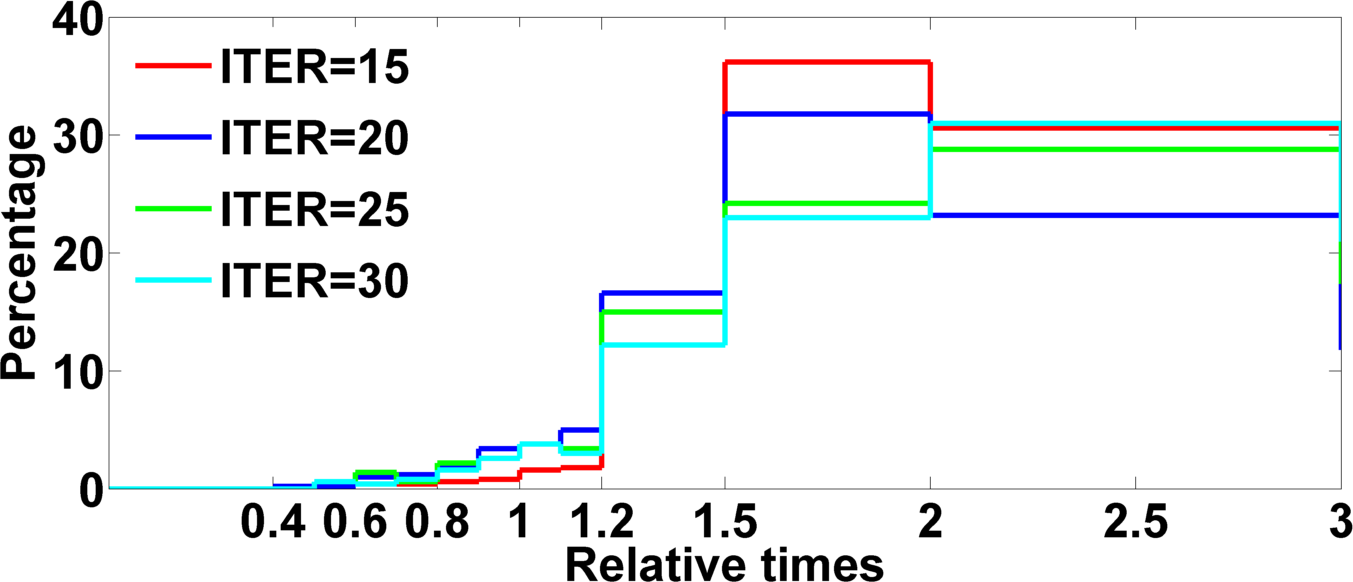}}\quad
		\subfloat[\scriptsize Relative times for seg2 on successImages of $4$ computational threads. The ORG in black represents $t_\mathrm{PBK}/t_\mathrm{BK}$, and its median is $0.56$. The red, blue, green and cyan lines represent $t_\mathrm{CPBK}/t_\mathrm{BK}$ with ITER set to $15$, $20$, $25$ and $30$, respectively, and their medians are $0.68$, $0.58$, $0.57$ and $0.55$.]{\label{fig:properSuccess4}\includegraphics[width=0.31\linewidth]{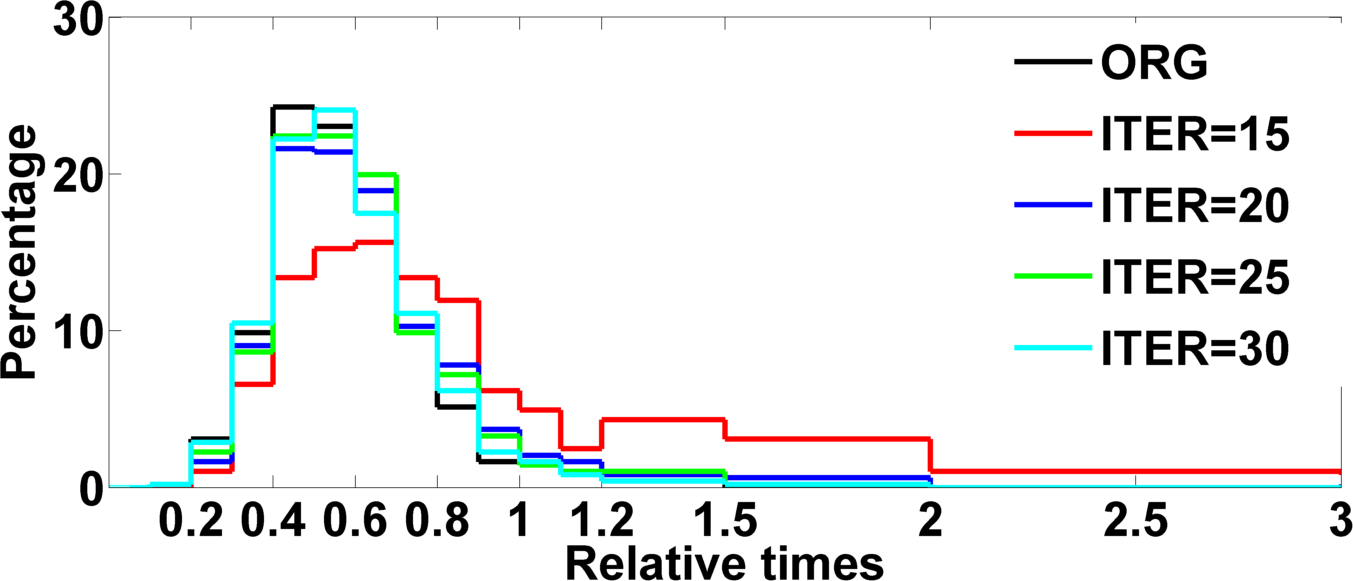}}\\
		\subfloat[\scriptsize Relative times for seg2 on successImages of $8$ computational threads. The ORG in black represents $t_\mathrm{PBK}/t_\mathrm{BK}$, and its median is $0.52$. The red, blue, green and cyan lines represent $t_\mathrm{CPBK}/t_\mathrm{BK}$ with ITER set to $15$, $20$, $25$ and $30$, respectively, and their medians are $0.56$, $0.50$, $0.50$ and $0.49$.]{\label{fig:properSuccess8}\includegraphics[width=0.31\linewidth]{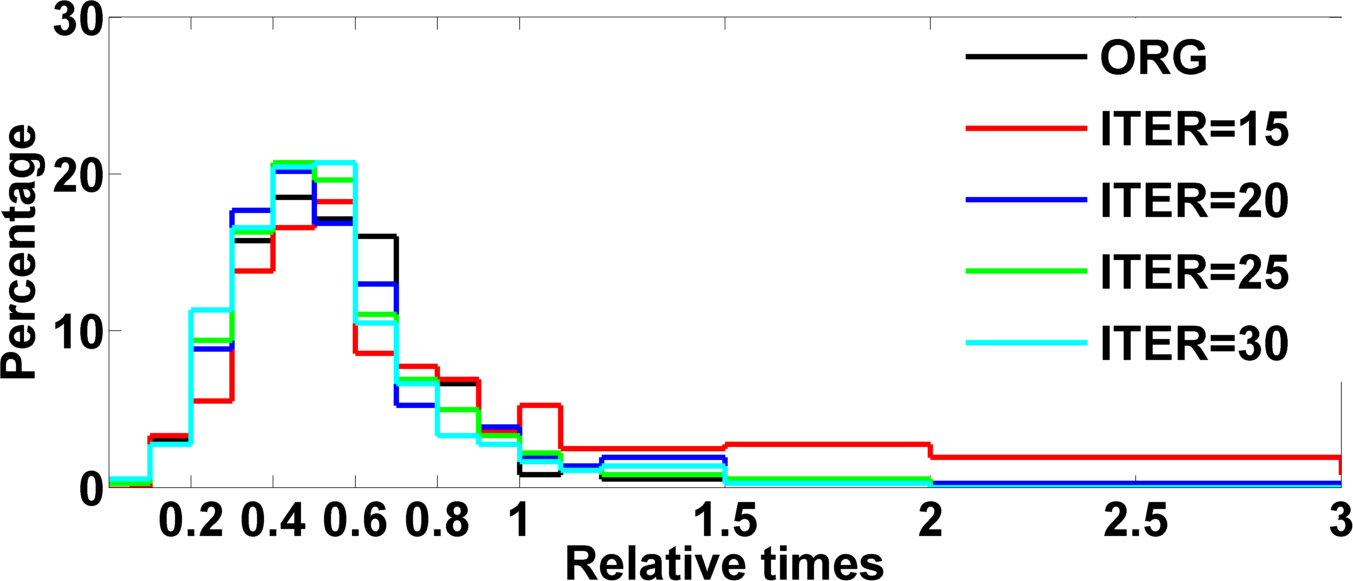}}\quad
		\subfloat[\scriptsize Relative times for seg2 on all 14 failedImages of 4 computational threads. The red, blue, green and cyan lines represent $t_\mathrm{CPBK}/t_\mathrm{BK}$ with $\mathrm{ITER}$ set to $15$, $20$, $25$ and $30$, respectively, and their medians are $0.75$, $0.71$, $0.73$ and $0.70$.]{\label{fig:properFailure4}\includegraphics[width=0.31\linewidth]{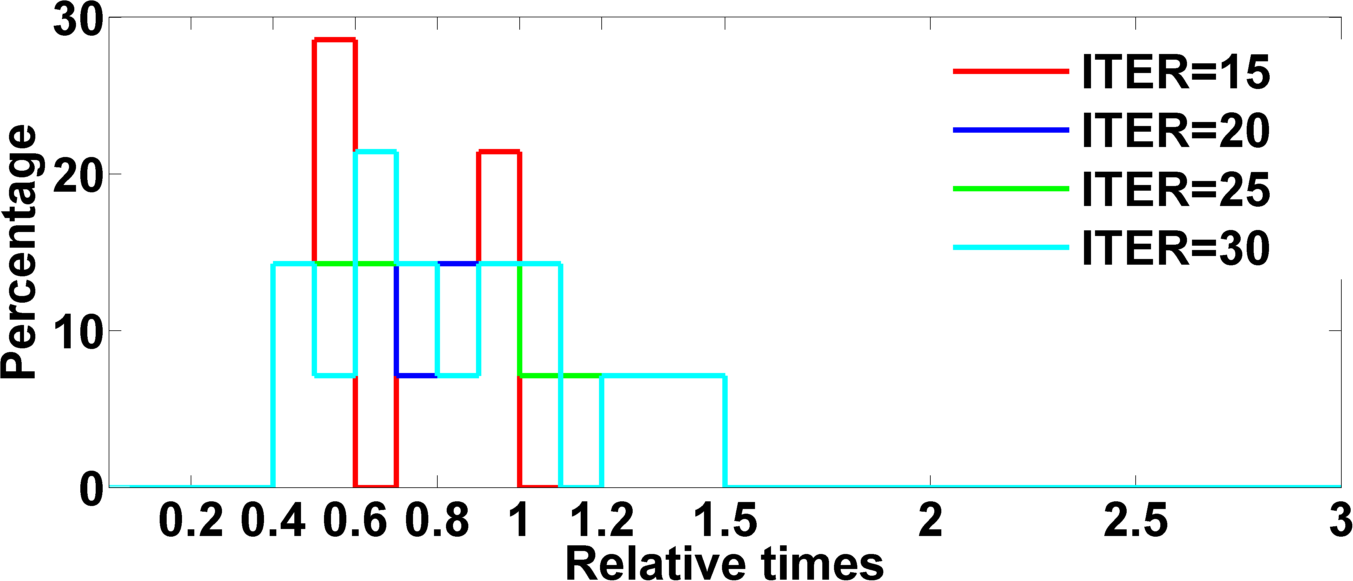}}\quad
		\subfloat[\scriptsize Relative times for seg2 on all 138 failedImages of 8 computational threads. The red, blue, green and cyan lines represent $t_\mathrm{CPBK}/t_\mathrm{BK}$ with $\mathrm{ITER}$ set to $15$, $20$, $25$ and $30$, respectively, and their medians are $0.66$, $0.69$, $0.72$ and $0.73$.]{\label{fig:properFailure8}\includegraphics[width=0.31\linewidth]{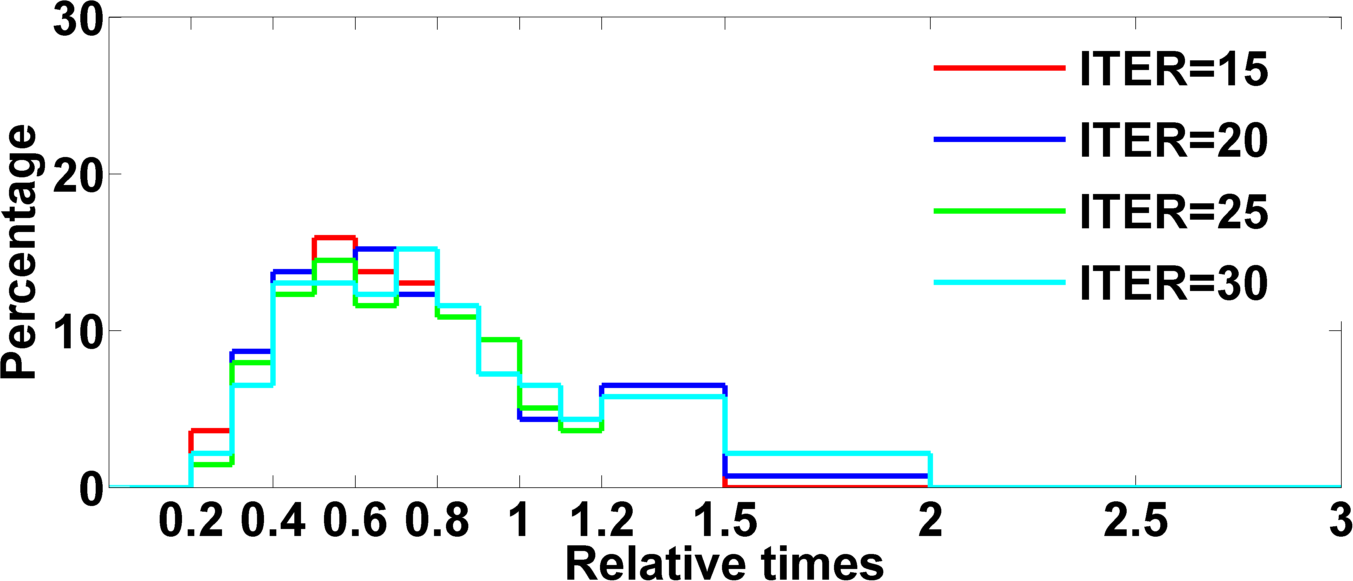}}	
	\end{center}
	\caption{Relative times for the two segmentation methods}
	\label{fig:relativeTimeSeg}
\end{figure*}

The relative times ($t_\mathrm{CPBK}/t_\mathrm{BK}$) for seg1 on the failedImages, which contains all the images of the dataset in these cases under $4$ and $8$ computational threads, are shown in Fig. \subref*{fig:improper4} and Fig. \subref*{fig:improper8}, respectively. Although all these images that failed to converge in the parallel BK-algorithm can certainly converge in the naive converged parallel BK-algorithm, they usually take a little more time than in the serial BK-algorithm, as shown in Fig. \ref{fig:relativeTimeSeg}\subref{fig:improper4},\subref{fig:improper8}. The improper settings of $\mathrm{ITER}$, \eg, $\mathrm{ITER}=15$, will degrade the performance of the naive converged parallel BK-algorithm. This is possibly because the subgraphs could not accumulate enough flows for reuse, and the merging operations would be invoked too many times, more than the necessary, if $\mathrm{ITER}$ is too small. However, the naive converged parallel BK-algorithm can come up with similar results with a wide range of $\mathrm{ITER}$, \eg, $\mathrm{ITER}=20,25,30$.

For seg2, successImages accounts for the majority of the $500$ images. Table \ref{tab:numMerge} shows the number of images that belong to successImages but that also triggered the merging operations in the naive converged parallel BK-algorithm under different $\mathrm{ITER}$ and $\mathrm{TRDS}$ settings. It is evident from Table \ref{tab:numMerge} that the small $\mathrm{ITER}$ values will lead to a large number of images to invoke the merging operations.
Experiments are carried out on successImages and failedImages, as shown in Fig. \ref{fig:relativeTimeSeg}\subref{fig:properSuccess4}-\subref{fig:properSuccess8}, and Fig. \ref{fig:relativeTimeSeg}\subref{fig:properFailure4}-\subref{fig:properFailure8}, respectively. The relative times ($t_\mathrm{PBK}/t_\mathrm{BK}$) of the parallel BK-algorithm on successImages are also shown in Fig. \ref{fig:relativeTimeSeg}\subref{fig:properSuccess4}-\subref{fig:properSuccess8}, shown as ORG in black, for comparison. It can be seen from Fig. \ref{fig:relativeTimeSeg}\subref{fig:properSuccess4}-\subref{fig:properSuccess8} that except for the $\mathrm{ITER}=15$ case, which is slightly inferior to the parallel BK-algorithm, the naive converged parallel BK-algorithm performs equally as well as the parallel BK-algorithm under all the other $\mathrm{ITER}$ and $\mathrm{TRDS}$ settings. Moreover, the naive converged parallel BK-algorithm is approximately $30\%$ faster, on average, than the serial BK-algorithm on failedImages, as shown in Fig. \ref{fig:relativeTimeSeg}\subref{fig:properFailure4}-\subref{fig:properFailure8}. This is a promising result. It shows that the naive converged parallel BK-algorithm not only has convergence guarantee but may also have a considerable speedup even for these images for which the parallel BK-algorithm failed to converge. Concerning the parameter selection of the naive converged parallel BK-algorithm, because the algorithm is not sensitive to the settings of $\mathrm{ITER}$, it can be set to a number in the range $15-30$, and $\mathrm{TRDS}$ can be set to a number in the range $2-8$. Usually, the larger $\mathrm{TRDS}$ is, the faster the naive converged parallel BK-algorithm is. However, the speedup gain grows very slowly if $\mathrm{TRDS}$ is further increased. 

\subsection{Semantic segmentation}
Among many other multi-label problems in computer vision~\cite{scienceChina_cite2}, where maximum flow solvers are invoked many times as a subroutine for each problem instance, such as stereo, image denoising, semantic segmentation might be one of the most challenging applications for the two parallelized BK-algorithms because of the following two difficulties: 1) typically, only a few labels (object classes) can appear in an image, but the label set is relatively large, so as to contain all the possible labels that can occur in the whole image set; 2) each label that appears is usually concentrated on a small part of the image except for the background. Due to the first difficulty, very few flows can be pushed from source to sink for the graphs built for the move-making energies that moved to the non-existent labels of the image, which makes the overhead much more significant for creating, synchronizing and terminating multiple threads. The latter difficulty makes the workload very unbalanced among the computational threads. We used the semantic segmentation algorithm proposed by Shotton~\cite{shotton2006textonboost} on the MSRC-21 dataset~\cite{shotton2006textonboost}, where the unary potentials for pixels are the negative log-likelihood of the corresponding output of the classifiers derived from TextonBoost~\cite{shotton2006textonboost}. The most discriminative weak classifiers are found using multi-class Gentle Ada-Boost~\cite{torralba2004sharing}, and the pairwise potentials are the contrast sensitive potentials~\cite{boykov2001iccv} defined on the eight-neighborhood system. The energy functions are solved via $\alpha$-expansion~\cite{boykov2001fast}. 

\begin{figure*}[htbp]
	\begin{center}
		\subfloat[\scriptsize Relative times ($t_\mathrm{PBK}/t_\mathrm{BK}$) for semantic segmentation on all the images of the MSRC-21 dataset that can converge in the parallel BK-algorithm with $\mathrm{TRDS}$ set to 2, 4, 8 and 12. The medians are $1.32$, $1.14$, $0.97$ and $1.29$.]{\label{fig:semanticORG}\includegraphics[width=0.45\linewidth]{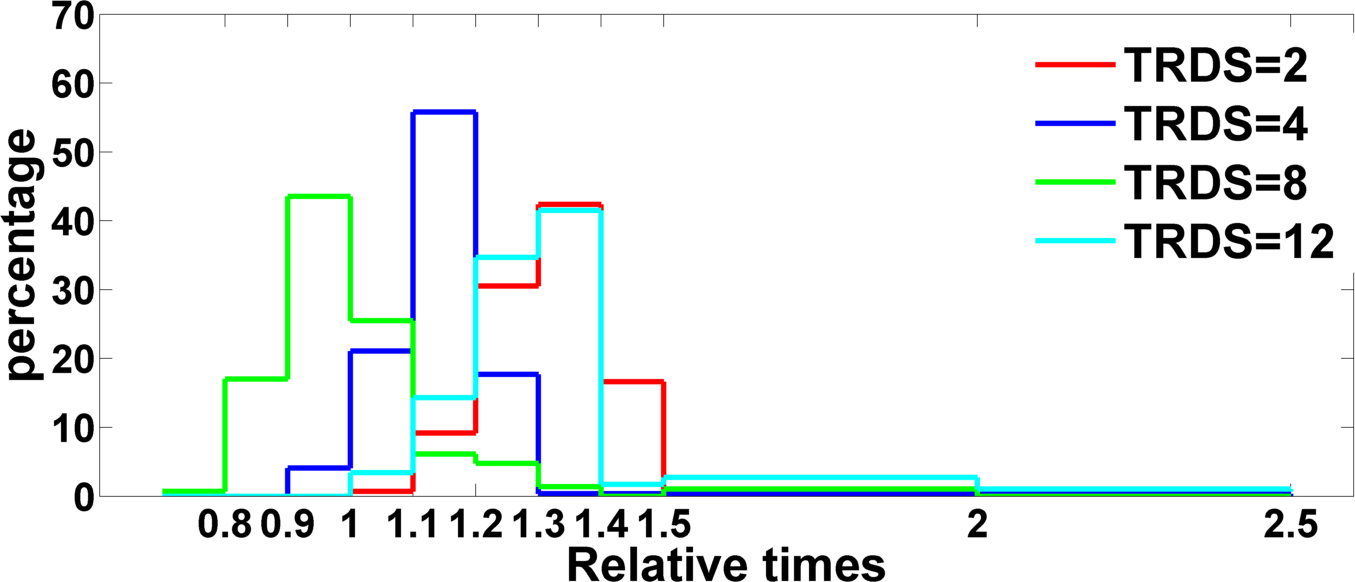}}\qquad\qquad
		\subfloat[\scriptsize Relative times ($t_\mathrm{CPBK}/t_\mathrm{PBK}$) on the unnecessary merged images in the naive converged parallel BK-algorithm with $\mathrm{TRDS}$ set to $8$ and $12$ and $\mathrm{ITER}$ set to $15$ and $25$. The medians are $1.01$, $1.00$, $0.96$ and $0.96$.]{\label{fig:semanticPGCMerged}\includegraphics[width=0.45\linewidth]{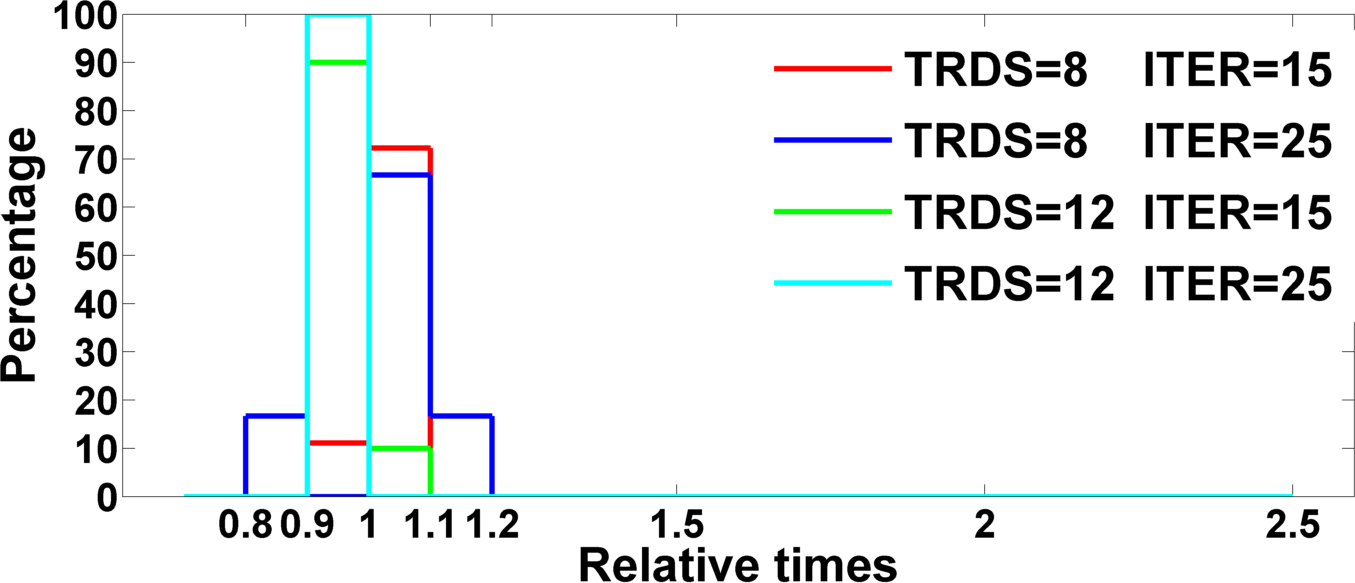}}
	\end{center}
	\caption{Relative times for semantic segmentation on the MSRC-21 dataset}
	\label{fig:relativeTimeMsrc}
\end{figure*}

Fig. \subref*{fig:semanticORG} shows the relative times for the parallel BK-algorithm using 2, 4, 8 and 12 computational threads on all the test images of the MSRC-21 dataset. It can be seen that the parallel BK-algorithm is always slower than the serial BK-algorithm, except for the 8 computational threads case, in which the parallel BK-algorithm is a bit faster than the BK-algorithm. Although the acceleration is poor, only one out of the $295$ test images is non-convergent for the 4, 8 and 12 computational threads cases. Moreover, only when $\alpha = 16$, which corresponds to solving the `road'-expansion energy function, does the parallel BK-algorithm failed to converge. This non-convergent example is shown in Fig. \ref{fig:failureExample}, where Fig.~\subref*{fig:sgIMG} is the original image and Fig.~\subref*{fig:sgGT} is the hand-labeled ground truth. The different pixel colors represent different object classes, whose correspondence is shown in Fig.~\subref*{fig:sgLegend}. The black pixels are the `void' pixels, which are ignored when computing semantic segmentation accuracies and are treated as unlabeled pixels during training. The semantic segmentation result when the serial BK-algorithm is used as the graph cuts solver is shown in Fig.~\subref*{fig:sgBK}. It is the benchmark of the correctness for all the other parallel graph cuts algorithms, and its running time acts as the baseline. The semantic segmentations by the parallel BK-algorithm with 4 and 8 computational threads are shown in Fig.~\subref*{fig:sg4PBK} and Fig.~\subref*{fig:sg8PBK} respectively. Fig.~\subref*{fig:sgCPBK} shows the result of the naive converged parallel BK-algorithm with 8 computational threads. The dotted blue lines in Fig.~\ref{fig:failureExample}\subref{fig:sg4PBK}--\subref{fig:sgCPBK} are the boundaries of the neighboring subgraphs. In contrast to Fig.~\ref{fig:diffNumThreads}, the graph is horizontally split for parallel graph cuts computation.

\begin{figure*}[htbp]
    \begin{center}
	\subfloat[Original image]{\label{fig:sgIMG}\includegraphics[width=0.32\linewidth]{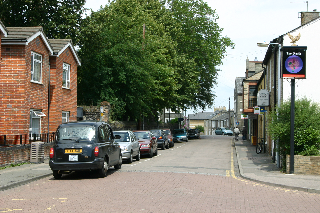}}\,
	\subfloat[Ground truth]{\label{fig:sgGT}\includegraphics[width=0.32\linewidth]{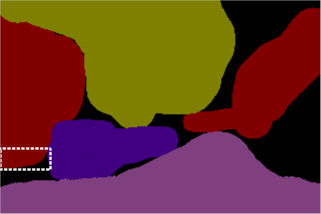}}\,
	\subfloat[The BK-algorithm]{\label{fig:sgBK}\includegraphics[width=0.32\linewidth]{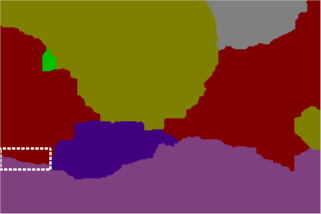}}\\
	\subfloat[The parallel BK-algorithm, 4 threads]{\label{fig:sg4PBK}\includegraphics[width=0.32\linewidth]{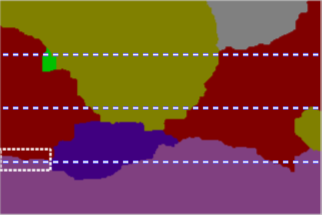}}\,
	\subfloat[The parallel BK-algorithm, 8 threads]{\label{fig:sg8PBK}\includegraphics[width=0.32\linewidth]{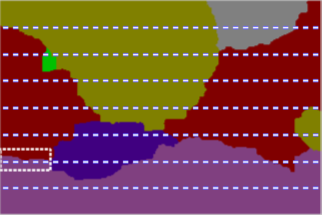}}\,
	\subfloat[The naive converged parallel BK-algorithm, 8 threads]{\label{fig:sgCPBK}\includegraphics[width=0.32\linewidth]{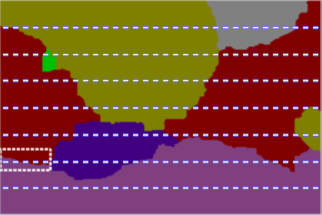}}\\
	\subfloat[Zoom in]{\label{fig:sgDetails}\includegraphics[height = 1.435cm]{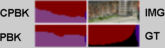}}\quad
	\subfloat[Color-coding legend]{\label{fig:sgLegend}\includegraphics[height = 1.435cm]{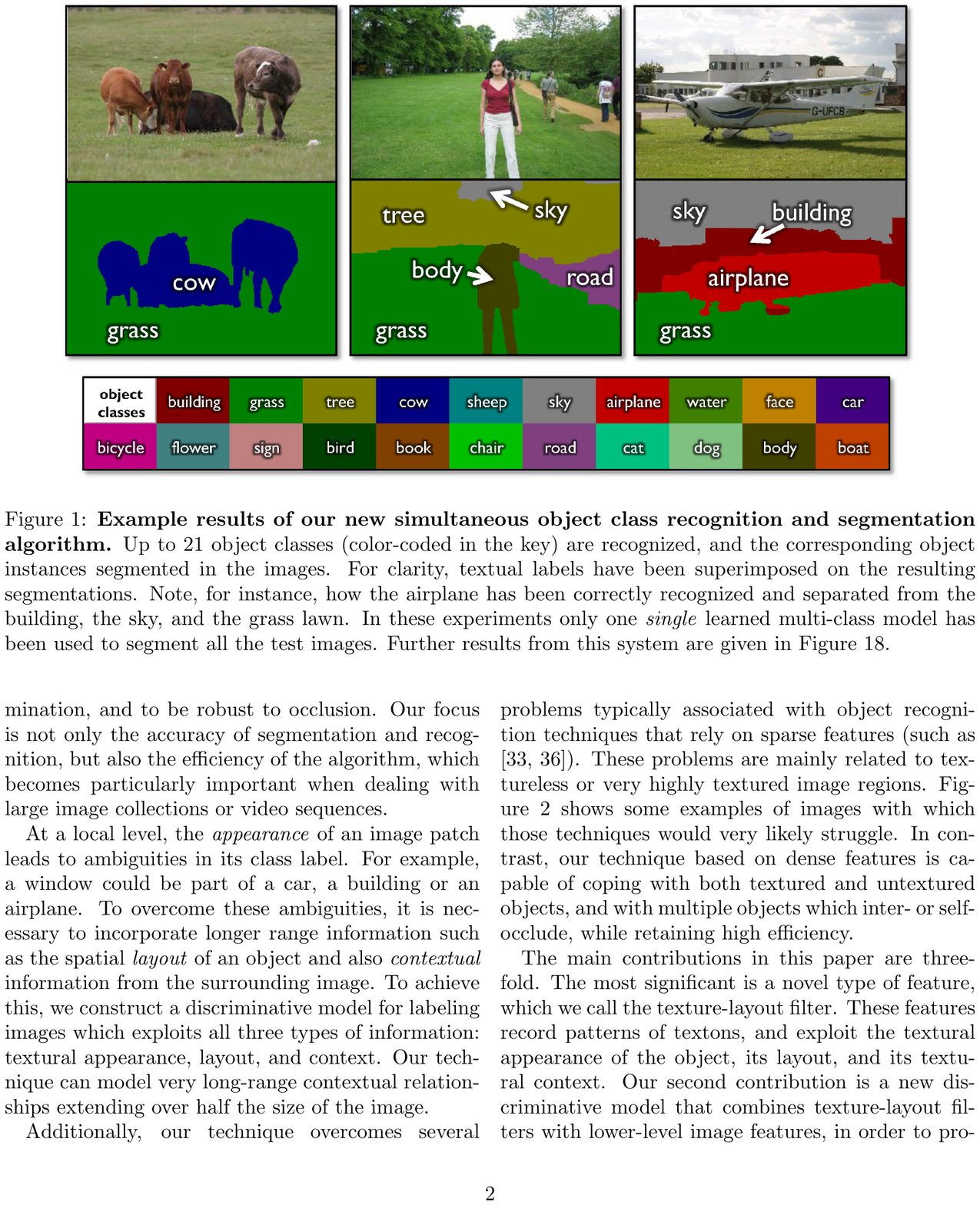}}
    \end{center}
    \caption{Non-converged example of the parallel BK-algorithm}
    \label{fig:failureExample}
\end{figure*}

Since the parallel BK-algorithm is not convergent by only expanding the road class, the differences between the semantic segmentations of the parallel BK-algorithm and those of the BK-algorithm and the naive converged parallel BK-algorithm lie only on the left border between `building' and `road', as shown in the white dotted boxes in Fig.~\ref{fig:failureExample}\subref{fig:sgGT}--\subref{fig:sgCPBK}. For better clarity, these white boxes are enlarged in Fig.~\subref*{fig:sgDetails}, where the result of the naive converged parallel BK-algorithm, which is the same as that of the BK-algorithm, is shown in the upper left. Since the parallel BK-algorithm obtained the same incorrect result under 4 and 8 threads, it is in the bottom left. The image and the ground truth are in the upper right and bottom right, respectively. It is clear that the accuracies, defined as the intersection vs union, of both the building and road classes of the parallel BK-algorithm are worse than those of the naive converged parallel BK-algorithm and the BK-algorithm. Because the road pixels in the white box of the ground truth are all labeled `void', only the building accuracy of the parallel BK-algorithm, which is $85.7\%$, is worse than those of the naive converged parallel BK-algorithm and the BK-algorithm, which are all $86.4\%$. The running time of the BK-algorithm in this example is $0.564$ seconds, while the parallel BK-algorithm takes much longer: $0.998$ seconds with 4 threads and $1.150$ seconds with 8 threads. The time the naive converged parallel BK-algorithm taken is comparable to that of the BK-algorithm: with 4 threads if $\rm ITER$ is set to $15$, $20$, $25$, it takes $0.560$, $0.556$ and $0.571$ seconds, respectively; with 8 threads if $\rm ITER$ is set to $15$, $20$, $25$, it takes $0.613$, $0.610$ and $0.618$ seconds, respectively.

The naive converged parallel BK-algorithm does not have the convergence problem and is much faster than the parallel BK-algorithm on the failure example. However, for the other images, on which the parallel BK-algorithm successfully converged, the naive converged parallel BK-algorithm may unnecessarily invoke merging operations, depending on the setting of $\mathrm{ITER}$, and must pay some extra cost. To evaluate this ``cost'', Fig.\subref*{fig:semanticPGCMerged} shows the relative times of $t_\mathrm{CPBK}/t_\mathrm{PBK}$ on all these images that get converged by the parallel BK-algorithm but unnecessarily invoked merging operations by the naive converged parallel BK-algorithm with $\mathrm{TRDS}$ set to $8$ and $12$ and $\mathrm{ITER}$ set to $15$ and $25$.

It can be seen from Fig. \subref*{fig:semanticPGCMerged} that for both $\mathrm{ITER}$ settings, the running times of the naive converged parallel BK-algorithm with $8$ computational threads are roughly equal to those of the parallel BK-algorithm on average but are slightly shorter than those of the parallel BK-algorithm with $12$ computational threads. This experiment shows that even in the most challenging case, the naive converged parallel BK-algorithm could have the convergence guarantee with almost no additional time cost over the parallel BK-algorithm.

\subsection{Deployment on distributed platforms}
The dynamic parallel graph cuts algorithm (Algorithm \ref{alg:dynamicParallel}) can be deployed on distributed platforms, where both splitting and merging can take place within the same machine or across different machines. Fig. \ref{fig:distCase} gives such an example on an MPICH2 cluster, where the yellow stripes are the overlapping regions within the same machine and the red stripes are the overlapping regions between two machines.
\begin{figure*}[htbp]
	\begin{center}
		\subfloat[Before merging]{\label{fig:distBefore}\includegraphics[width=0.45\linewidth]{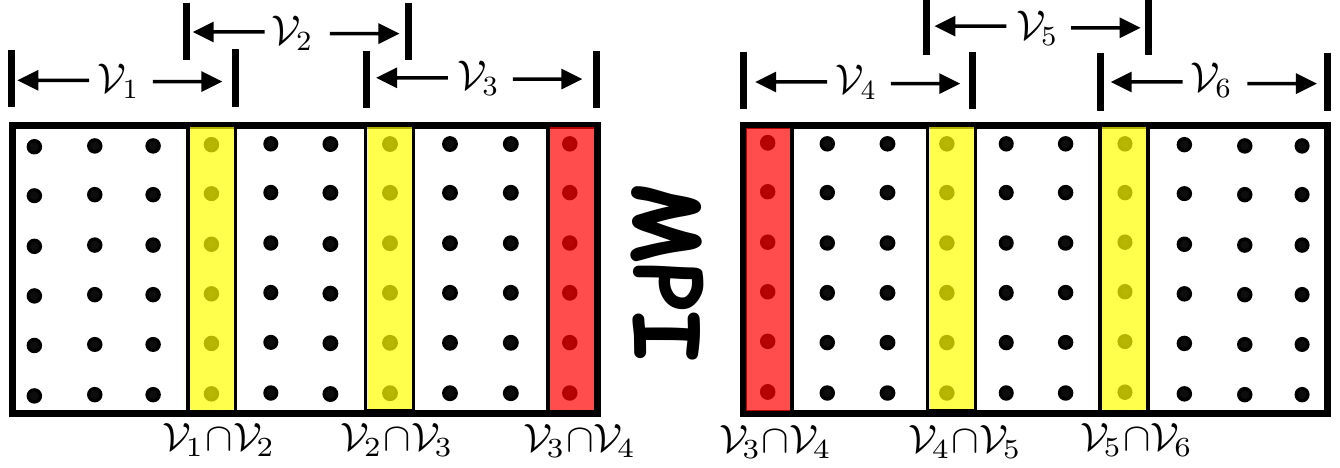}}\qquad\qquad
		\subfloat[After merging]{\label{fig:distAfter}\includegraphics[width=0.45\linewidth]{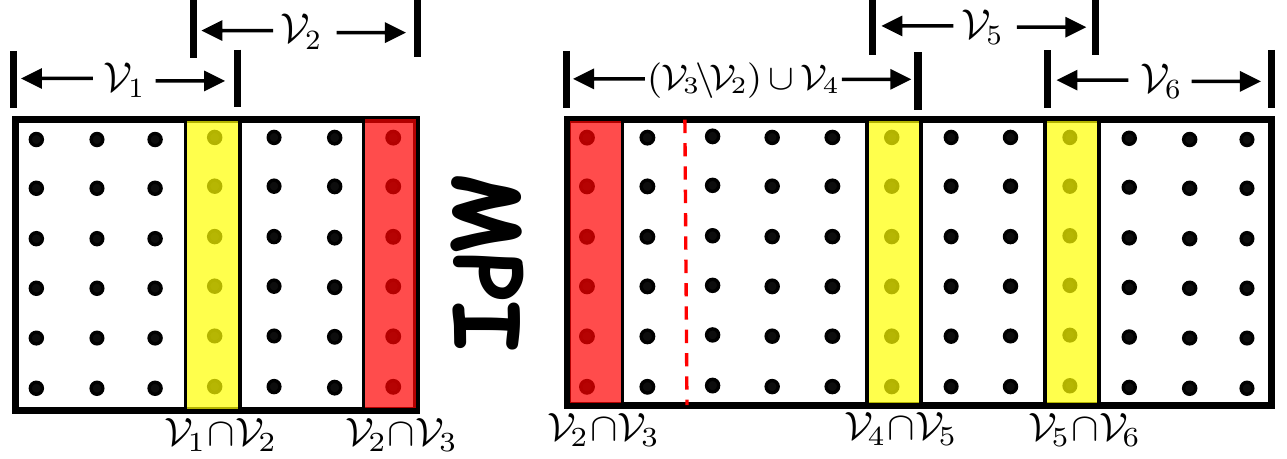}}
	\end{center}
	\caption{Dynamic parallel graph cuts algorithm on an MPICH2 cluster}
	\label{fig:distCase}
\end{figure*}

There is no substantial difference if the merging operations only take place within the same machine. There are two ways to merge the subgraphs that spread across different machines, depending on whether the merged graph can fit in the memory of one of the machines. If it can, merging acts in the same way as on the shared memory platforms, except that MPI~\cite{SnirMPIReference} is used for the inter-machine communication. Fig. \ref{fig:distCase} illustrates such a case, where the rightmost subgraph in the left machine is merged into the machine on the right side. The communication overheads in this case are significant and largely depend on the size of the subgraph being merged. If the merged graph cannot fit in the memory of a single machine, the current splitting can only be ``adjusted'', not simply merged. ``Adjusting'' is achieved via first merging a small portion with the overlapped region and then re-splitting on the new boundary. Therefore only a small part rather than the whole subgraph is merged in this case. It is worth noting that Theorem \ref{thm:conclusion} holds in all cases, such that the merged graph is a reparameterization of the original graph with less flows. Also, note that besides the parallel BK-algorithm, any other distributed graph cuts algorithms could also be used here.

We use some of the big instances in~\cite{maxflowInstances} to evaluate our naive converged parallel BK-algorithm on a $2\times2$ grid of an MPICH2 cluster. The whole graph is split across the four machines and further split into two subgraphs within each machine. In contrast to all the above experiments, $\rm ITER$ has a great impact on the performance of the naive converged parallel BK-algorithm in this distributed setting. The LB07-bunny-lrg dataset, which was solved in $6.3$ seconds by the parallel BK-algorithm, was solved in $6.4$, $6.4$ and $6.6$ seconds by the naive converged parallel BK-algorithm when $\rm ITER$ was set to $25$, $20$ and $15$, respectively. However, further reducing $\rm ITER$ would dramatically increase the running time. The naive converged parallel BK-algorithm took $25.5$ seconds when $\rm ITER=10$, whereas it took $70$ seconds when $\rm ITER=5$. The underlying reason is that network speed is much slower than RAM speed, such that merging subgraphs across different machines is a time-consuming operation. When $\rm ITER$ is relatively small, subgraphs across different machines will be merged. For instance, in the case of $\rm ITER=5$, merging operation was invoked three times such that all eight subgraphs across four machines were merged into one graph, and communication took more than $40$ out of a total $70$ seconds. Thus the naive converged parallel BK-algorithm may pay some higher running time cost for the guaranteed convergence under the above simple merging and splitting strategies.

\section{Conclusion and further work}
To remedy the non-convergence problem in the parallel BK-algorithm, we first introduced a merging method along with the naive converged parallel BK-algorithm that simply merges every two neighboring subgraphs once the algorithm is found to have difficulty converging under the current subgraph splitting configuration. We then introduced a new pseudo-boolean representation for graph cuts, namely the restricted homogeneous posiforms. We further developed an invariance analysis method for graph cuts algorithms and proved the correctness and effectiveness of our proposed merging method. Based on the merging method, we proposed a general parallelization framework called the dynamic parallel graph cuts algorithm, which allows the subgraph splitting configuration to be adjusted freely during the parallelization process and has guaranteed convergence at the same time. Extensive experiments for image segmentation and semantic segmentation with different number of computational threads demonstrated that even under the simplest merging strategy, our proposed naive converged parallel BK-algorithm not only has guaranteed convergence but also has competitive computational efficiency compared with the parallel BK-algorithm.

How to split a graph for parallel computation and how to dynamically adjust the subgraph splitting configuration during the parallelization process in order to further boost the efficiency and accelerate the convergence rate are our future research directions.

\section*{Acknowledgment}
This work was supported by National Natural Science Foundation of China under grants 61333015, 61421004, 61473292. We also thank Dr. Nilanjan Ray and the anonymous reviewers for their valuable comments.

\bibliographystyle{IEEEtran}
\bibliography{Reference}

\appendices
\counterwithin{figure}{section}
\section{Proof of Proposition \ref{prop:split}} \label{app:split}
\begin{proof}
	Fig.\ref{fig:graphSplitting}\subref{fig:mergedGraph}--\subref{fig:splitGraphR} give an example of splitting, where the capacities of the edges in the two split subgraphs, whose two terminals are common for the two subgraphs, are weighted by $1/2$ and all the other capacities simply come from those of their corresponding edges in the original graph. Therefore, the following relations hold for the coefficients of the restricted homogeneous posiforms of the two split subgraphs $G_1^{(1)}$, $G_2^{(1)}$ and the original graph $G$:
	\begin{equation}
	\widehat{\underaccent{\dot}{a}}_i^{1^{(1)}} \!\!= \widehat{\underaccent{\dot}{a}}_i  ,\,\,\,\,  \widehat{\underaccent{\ddot}{a}}_i^{1^{(1)}} \!\!= \widehat{\underaccent{\ddot}{a}}_i ,\,\,  \forall i\in \mathcal{V}_1\!\!\setminus\!\!\mathcal{V}_2,
	\end{equation}
	\begin{equation}
	\widetriangle{\underaccent{\dot}{a}}_i^{2^{(1)}} \!\!= \widetriangle{\underaccent{\dot}{a}}_i  ,\,\,\,\,  \widetriangle{\underaccent{\ddot}{a}}_i^{2^{(1)}} \!\!= \widetriangle{\underaccent{\ddot}{a}}_i ,\,\,  \forall i\in \mathcal{V}_2\!\!\setminus\!\!\mathcal{V}_1,
	\end{equation}
	\begin{equation}
	\bar {\underaccent{\dot}{a}}_i^{1^{(1)}}\!\! = \bar {\underaccent{\dot}{a}}_i^{2^{(1)}}\!\!=\frac{1}{2}\bar {\underaccent{\dot}{a}}_i,\,\,\,\,\bar {\underaccent{\ddot}{a}}_i^{1^{(1)}}\!\! = \bar {\underaccent{\ddot}{a}}_i^{2^{(1)}} \!\!= \frac{1}{2}\bar {\underaccent{\ddot}{a}}_i,\,\, \forall i\in \mathcal{V}_1\!\cap\!\mathcal{V}_2,
	\end{equation}
	\begin{equation}
	\widehat a_{ij}^{1^{(1)}}\!\! = \widehat a_{ij},\forall i\!\text{ \it or }\!j \in \mathcal{V}_1\setminus\mathcal{V}_2,
	\end{equation}
	\begin{equation}
	\widetriangle a_{ij}^{2^{(1)}}\!\! = \widetriangle a_{ij},\forall i\!\text{ \it or }\!j \in \mathcal{V}_2\setminus\mathcal{V}_1,
	\end{equation}
	\begin{equation}
	\bar a_{ij}^{1^{(1)}} \!\!= \bar a_{ij}^{2^{(1)}} \!\!=\frac{1}{2}\bar a_{ij},\,\,\forall i,j\in \mathcal{V}_1\!\cap\!\mathcal{V}_2.
	\end{equation}
	And from the relations of the coefficients of the restricted homogeneous posiforms and the multi-linear polynomials, stated in (\ref{eq:relationFPhi^})--(\ref{eq:relationF2Phi2$}), it is easy to verify that all the relations listed in this theorem hold.
\end{proof}

\section{Proof of Proposition \ref{prop:maxflow}} \label{app:maxflow}
\begin{proof}
All the flows are pushed through the augmenting paths in the augmenting path maxflow algorithms. Without loss of generality, suppose $s \rightarrow i_1 \rightarrow i_2 \rightarrow \cdots \rightarrow i_k \rightarrow t $ is such an augmenting path in graph $G$ with the minimal capacity of forward arc (edge) greater than or equal to $\mathrm{F}$. Therefore a flow of $\mathrm{F}$ can be pushed along this augmenting path, resulting in a residual graph $G'$. It is clear that graph $G$ and residual graph $G'$ only differ in these forward and backward arcs, such that the restricted homogeneous posiforms of the two graphs can be written as:
\begin{equation}
\phi_h^G(\mathbf{x}) = \accentset{a}{\phi}^G_h(\mathbf{x}) + \accentset{r}{\phi}^G_h(\mathbf{x}),
\end{equation}
\begin{equation}
\phi_h^{G'}(\mathbf{x}) = \accentset{a}{\phi}^{G'}_h(\mathbf{x}) + \accentset{r}{\phi}^{G'}_h(\mathbf{x}),
\end{equation}
where $\accentset{a}{\phi}^G_h(\mathbf{x})$ and $\accentset{a}{\phi}^{G'}_h(\mathbf{x})$ contain all the components corresponding to all the forward and backward arcs along the augmenting path of graph $G$ and $G'$ respectively. $\accentset{r}{\phi}^G_h(\mathbf{x})$ and $\accentset{r}{\phi}^{G'}_h(\mathbf{x})$ contain all the rest components of graph $G$ and $G'$, such that $\accentset{r}{\phi}^{G}_h(\mathbf{x}) = \accentset{r}{\phi}^{G'}_h(\mathbf{x})$. Suppose that $\accentset{a}{\phi}^G_h(\mathbf{x})$ is in the following form:
\begin{equation}
\accentset{a}{\phi}^G_h(\mathbf{x}) = 
a_{i1}x_{i1} + a_{i_1i_2}\bar x_{i_1}x_{i_2} + a_{i_2i_1}\bar x_{i_2}x_{i_1} + \cdots + a_{i_k}\bar x_{k},
\end{equation}
where all the coefficients of the forward arcs (\eg $a_{i_1i_2}$) are greater than or equal to $\mathrm{F}$. Pushing a flow $\mathrm{F}$ in this augmenting path will cause all the capacities of the forward arcs reduced by $\mathrm{F}$ and all the capacities of the backward arcs increased by $\mathrm{F}$. Therefore $\accentset{a}{\phi}^{G'}_h(\mathbf{x})$ is in the following form
\begin{equation}
\begin{split}
\accentset{a}{\phi}^{G'}_h(\mathbf{x}) & = (a_{i1} - \mathrm{F})x_{i1} + (a_{i_1i_2} - \mathrm{F})\bar x_{i_1}x_{i_2} + \\
&\mathrel{\phantom{=}} (a_{i_2i_1} + \mathrm{F}) \bar x_{i_2}x_{i_1} 
+  \cdots + (a_{i_k}- \mathrm{F})\bar x_k.
\end{split}
\end{equation}
Therefore,
\begin{equation}
\begin{split}
&{}\accentset{a}{\phi}^G_h(\mathbf{x}) - \accentset{a}{\phi}^{G'}_h(\mathbf{x}) =\\
&{}\mathrm{F}\times\left(
x_{i_1} + \bar x_{i_1}x_{i_2} -\bar x_{i_2}x_{i_1} + \cdots + \bar x_{k}
\right),
\end{split}
\end{equation}
it is easy to verify $x_{i_1} + \bar x_{i_1}x_{i_2} -\bar x_{i_2}x_{i_1} + \cdots + \bar x_{k} \equiv 1 $, hence 
\begin{equation}
\accentset{a}{\phi}^G_h(\mathbf{x}) - \accentset{a}{\phi}^{G'}_h(\mathbf{x})  \equiv  \mathrm{F}.
\end{equation}
Or $\phi^G_h(\mathbf{x})  = \phi^{G'}_h(\mathbf{x})  + \mathrm{F}$, and Proposition \ref{prop:maxflow} is proved.
\end{proof}

\section{Proof of Proposition \ref{prop:update}} \label{app:update}
\begin{proof}
Updating operations only modify the T-link capacities of all the nodes within $\mathcal{V}_1\!\cap\!\mathcal{V}_2$ in the following way:
\begin{equation}
\bar{\underaccent{\dot}{a}}^{1^{(k+1)}}_i = \bar{\underaccent{\dot}{a}}^{1^{(k)}}_i + \triangle\lambda_i^{(k+1)}, \forall i\in\mathcal{V}_1\!\cap\!\mathcal{V}_2,
\end{equation}
\begin{equation}
\bar{\underaccent{\ddot}{a}}^{2^{(k+1)}}_i = \bar{\underaccent{\ddot}{a}}^{2^{(k)}}_i + \triangle\lambda_i^{(k+1)}, \forall i\in\mathcal{V}_1\!\cap\!\mathcal{V}_2.
\end{equation}
And from (\ref{eq:relationF1Phi1barNode}) and (\ref{eq:relationF2Phi2barNode}), it is easy to verify that all the conclusions in this proposition hold.
\end{proof}

\section{Proof of Proposition \ref{prop:merge}} \label{app:merge}
\begin{proof}
    From the merging operation described in section \ref{sec:mergingMethod} and Theorem \ref{thm:homogeneousRep}, the following relations hold for the coefficients of the restricted homogeneous posiforms of the merged graph $G'$ and those of the two subgraphs $G_1^{(k+1)}$, $G_2^{(k+1)}$
\begin{equation}
\widehat{\underaccent{\dot}{a}}'_i   = \widehat{\underaccent{\dot}{a}}_i^{1^{(k+1)}} \!\! ,\,\,\,\,  \widehat{\underaccent{\ddot}{a}}'_i = \widehat{\underaccent{\ddot}{a}}_i^{1^{(k+1)}} \!\!  ,\,\,  \forall i\in \mathcal{V}_1\!\!\setminus\!\!\mathcal{V}_2
\end{equation}
\begin{equation}
\widetriangle{\underaccent{\dot}{a}}'_i= \widetriangle{\underaccent{\dot}{a}}_i^{2^{(k+1)}} \!\!  ,\,\,\,\,   \widetriangle{\underaccent{\ddot}{a}}'_i = \widetriangle{\underaccent{\ddot}{a}}_i^{2^{(k+1)}} \!\! ,\,\,  \forall i\in \mathcal{V}_2\!\!\setminus\!\!\mathcal{V}_1
\end{equation}
\begin{equation}
    \begin{split}
\bar {\underaccent{\dot}{a}}'_i &=\bar {\underaccent{\dot}{a}}_i^{1^{(k+1)}}\!\!+ \bar {\underaccent{\dot}{a}}_i^{2^{(k+1)}}\\
\bar {\underaccent{\ddot}{a}}'_i &= \bar {\underaccent{\ddot}{a}}_i^{2^{(k+1)}}\!\! + \bar {\underaccent{\ddot}{a}}_i^{1^{(k+1)}} 
\end{split}
\,\,\,,\,\quad\forall i\in \mathcal{V}_1\!\cap\!\mathcal{V}_2
\end{equation}
\begin{equation}
\widehat a'_{ij} = \widehat a_{ij}^{1^{(k+1)}} ,\,\,\forall i\!\text{ \it or }\!j \in \mathcal{V}_1\setminus\mathcal{V}_2
\end{equation}
\begin{equation}
\widetriangle a'_{ij} = \widetriangle a_{ij}^{2^{(k+1)}},\,\,\forall i\!\text{ \it or }\!j \in \mathcal{V}_2\setminus\mathcal{V}_1
\end{equation}
\begin{equation}
\bar a'_{ij} = \bar a_{ij}^{1^{(k+1)}} \!\! + \bar a_{ij}^{2^{(k+1)}}  ,\,\,\forall i,j\in \mathcal{V}_1\!\cap\!\mathcal{V}_2
\end{equation}
And all the four equalities in Proposition \ref{prop:merge} can be verified by combining the relations of the coefficients of the restricted homogeneous posiforms and the multi-linear polynomials, stated in (\ref{eq:relationFPhi^})--(\ref{eq:relationF2Phi2$}).
\end{proof}

\section{An Illustrative Example of Equivalent Graph Cuts Problems} \label{app:illustrative}
Here we give an illustrative example on the equivalence of graph cuts problems. For the three graph cuts problems defined on the graphs shown in Fig.~\ref{fig:pushFlows}\subref{fig:G0}-\subref{fig:G2} respectively, their unique restricted homogeneous posiforms, according to Theorem \ref{thm:homogeneousRep}, are 
\begin{figure}[htbp]
	\begin{center}
		\subfloat[Original graph $G_0$]{\label{fig:G0}\includegraphics[width=0.3\linewidth]{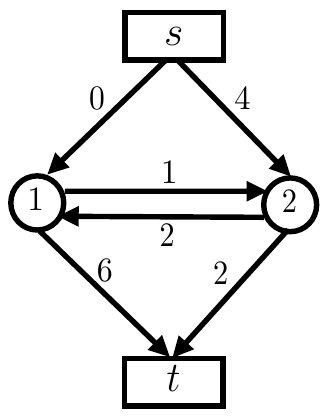}}\quad
		\subfloat[Residual graph $G_1$]{\label{fig:G1}\includegraphics[width=0.3\linewidth]{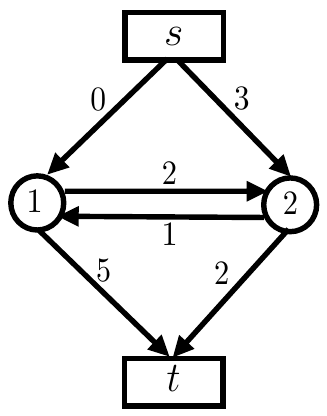}}\quad
		\subfloat[Residual graph $G_2$]{\label{fig:G2}\includegraphics[width=0.3\linewidth]{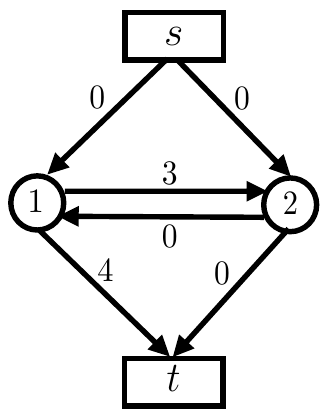}}		
	\end{center}
	\caption{Three equivalent graph cuts problems.}
	\label{fig:pushFlows}
\end{figure}

\begin{align}
\phi^{G_0}_h(\mathbf{x}_\mathcal{V}) & = 6\bar x_1 + 4x_2 + 2\bar x_2 + \bar x_1x_2 + 2\bar x_2x_1\,,\label{eq:hG0}\\
\phi^{G_1}_h(\mathbf{x}_\mathcal{V}) & = 5\bar x_1 + 3x_2 + 2\bar x_2 + 2\bar x_1x_2 + \bar x_2x_1\,, \label{eq:hG1}\\
\phi^{G_2}_h(\mathbf{x}_\mathcal{V}) & = 4\bar x_1 + 3\bar x_1x_2\,.  \label{eq:hG2}
\end{align}
And their corresponding multi-linear polynomials are
\begin{align}
f^{G_0}(\mathbf{x}_\mathcal{V}) & = 8 - 4x_1 + 3x_2 - 3x_1x_2 \,,\label{eq:fG0}\\
f^{G_1}(\mathbf{x}_\mathcal{V}) & = 7 - 4x_1 + 3x_2 - 3x_1x_2 \,, \label{eq:fG1}\\
f^{G_2}(\mathbf{x}_\mathcal{V}) & = 4 - 4x_1 + 3x_2 - 3x_1x_2 \,.  \label{eq:fG2}
\end{align}
Obviously, 
\begin{equation}
f^{G_0}(\mathbf{x}_\mathcal{V}) = f^{G_1}(\mathbf{x}_\mathcal{V}) + 1 = f^{G_2}(\mathbf{x}_\mathcal{V}) + 4.
\end{equation}

Therefore, $f^{G_0}(\mathbf{x}_\mathcal{V})$ and $f^{G_1}(\mathbf{x}_\mathcal{V})$ only differ by a constant of 1, and $f^{G_0}(\mathbf{x}_\mathcal{V})$ and $f^{G_2}(\mathbf{x}_\mathcal{V})$ only differ by a constant of 4. Note that ``1'' equals the total flow pushed through the path $s\rightarrow 2\rightarrow 1\rightarrow t$ in $G_0$, by which the residual graph $G_1$ is produced, and ``4'' equals the sum of the total flows pushed through the path $s\rightarrow 2\rightarrow 1\rightarrow t$, and those through the path $s\rightarrow 2\rightarrow t$ in $G_0$, by which the residual graph $G_2$ is produced. Proposition \ref{prop:maxflow} showed that pushing a flow $\mathrm{F}$ in graph G will only reduce the multi-linear polynomial of $G$ by a constant of $\mathrm{F}$. Here, the flow $\mathrm{F}$ varies by each augmenting path operation, but it is independent of the variables of the pseudo-boolean function representing the graph cuts problem. Hence the three graph cuts problems are equivalent.

\begin{biography}[{\includegraphics[width=1in,clip,keepaspectratio]{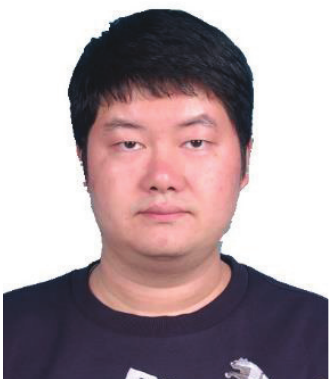}}]{Miao Yu} received his B.B.A. and M.S. both from Southwest Jiaotong University in 2004 and 2007 respectively, and the Ph.D. from the Institute of Automation, Chinese Academy of Sciences in 2016. Since 2007, he has been with the Zhongyuan University of Technology, where he is now a lecturer. His research interests include parallel computing, probabilistic graphical models, higher order energy minimization and scene understanding.
\end{biography}

\begin{biography}[{\includegraphics[width=1in,clip,keepaspectratio]{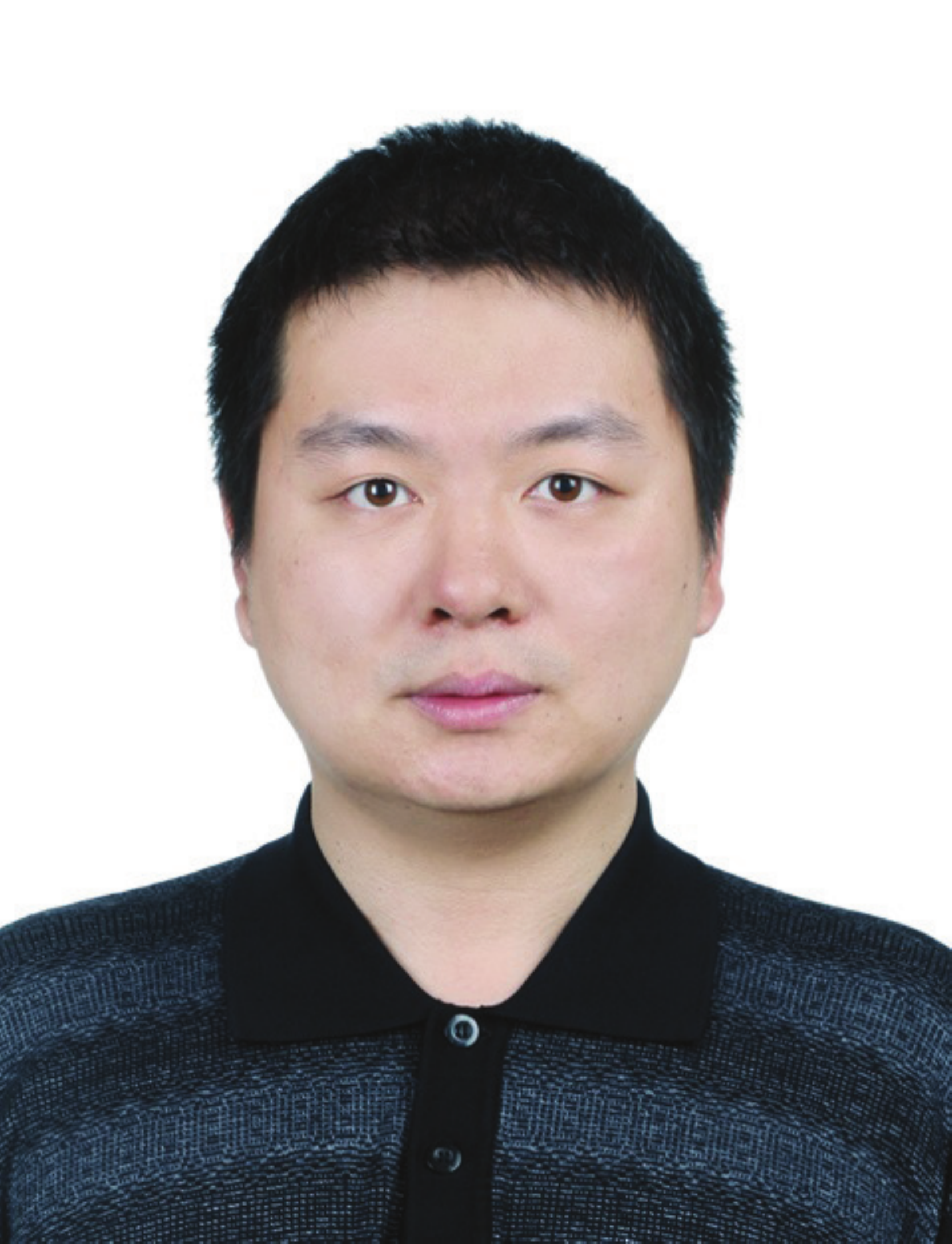}}]{Shuhan Shen} received his B.S. and M.S. both from Southwest Jiao Tong University in 2003 and 2006 respectively, and the Ph.D. from Shanghai Jiao Tong University in 2010. Since 2010, he has been with the National Laboratory of Pattern Recognition at Institute of Automation, Chinese Academy of Sciences, where he is now an associate professor. His research interests are in 3D computer vision, which include image based 3D modeling of large scale scenes, 3D perception for intelligent robot, and 3D semantic reconstruction.
\end{biography}

\begin{biography}[{\includegraphics[width=1in,clip,keepaspectratio]{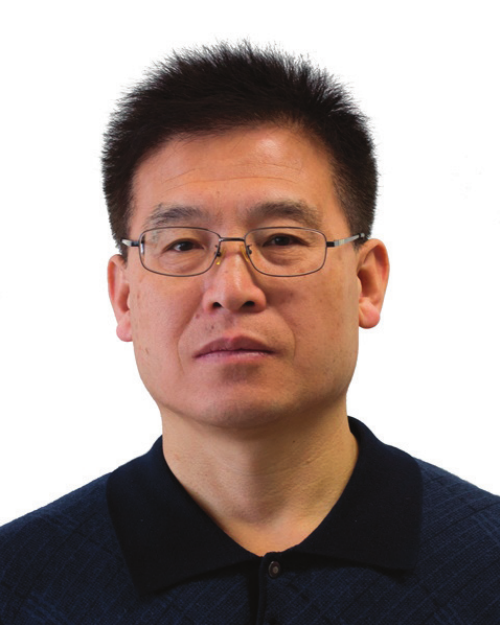}}]{Zhanyi Hu} received his B.S. Degree in Automation from the North China University of Technology in 1985, the Ph.D. Degree (Docteur d'\'{E}tat) in Computer Science from the University of Liege, Belgium, in Jan. 1993. Since 1993, he has been with the Institute of Automation, Chinese Academy of Sciences. From 1997 to 1998, he was a Visiting Scientist in the Chinese University of Hong Kong. From 2001 to 2005, he was an executive panel member of National High-Tech R\& D Program (863 Program). From 2005 to 2010, he was a member of the Advisory Committee of the National Natural Science Foundation of China. Dr. Hu is now a research professor of computer vision, a deputy editor-in-chief for Chinese Journal of CAD\& CG, and an associate editor for Science in China, and Journal of Computer Science and Technology. He was the organization committee co-chair of ICCV2005, and a program co-chair of ACCV2012. Dr. Hu has published more than 150 peer-reviewed journal papers, including IEEE PAMI, IEEE IP, IJCV. His current research interests include biology-inspired vision and large scale 3D reconstruction from images.
\end{biography}
\end{document}